\theoremstyle{plain} \newtheorem{lemma}{Lemma}
\theoremstyle{plain} \newtheorem{proposition}{Proposition}
\theoremstyle{plain} 
\theoremstyle{definition} \newtheorem{corollary}{Corollary}
\theoremstyle{definition} 
\theoremstyle{definition} 
\theoremstyle{definition} \newtheorem{example}{Example}
\theoremstyle{plain} \newtheorem*{lemma*}{Lemma}
\theoremstyle{plain} \newtheorem*{proposition*}{Proposition}
\theoremstyle{plain} \newtheorem*{theorem*}{Theorem}
\theoremstyle{definition} \newtheorem*{corollary*}{Corollary}
\theoremstyle{definition} \newtheorem*{definition*}{Definition}
\theoremstyle{definition} \newtheorem*{assumption*}{Assumption}
\theoremstyle{definition} \newtheorem*{example*}{Example}
\newcommand{\R}{\ensuremath{\mathbb{R}}}										
\newcommand{\argmax}{\operatornamewithlimits{argmax}}
\newcommand{\maximize}{\operatornamewithlimits{maximize}}
\newcommand{\E}[1]{\mathbb{E}\left[#1\right]}								
\newcommand{\Prob}[1]{\mathbb{P}\left[#1\right]}						
\newcommand{\pos}[1]{\left[#1\right]^{+}}										
\newcommand{\indicator}[1]{\textbf{1}_{\left\{#1\right\}}} 	
\newcommand{\pd}[2]{\frac{\partial #1}{\partial #2}}				
\newcommand{\mb}[1]{\ensuremath{\mathbf{#1}}}
\newcommand{\bs}[1]{\boldsymbol #1}													
\newcommand{\mc}[1]{\ensuremath{\mathcal{#1}}}							
\newcommand{\beq}[1]{\begin{equation} \label{eq:#1}}
\newcommand{\eeq}{\end{equation}}
\newcommand{\beqn}{\begin{equation*}}
\newcommand{\eeqn}{\end{equation*}}
\newcommand{\setN}{\ensuremath{\mathcal{N}}}	
\newcommand{\setX}{\ensuremath{\mathcal{X}}}	
\newcommand{\setA}{\ensuremath{\mathcal{A}}}	
\newcommand{\thld}{t(\mb{x},\mb{p})}
\newcommand{\elr}[1]{\text{ELR}(\mb{x},\mb{p},#1)}
\newcommand{\elrh}[1]{\text{ELR}(\widehat{\mb{x}},\widehat{\mb{p}},#1)}
\newcommand{\fh}[1]{f_H(#1,\mb{X})}
\newcommand{\fl}[1]{f_L(#1,\mb{X})}
\title{\bf Efficiency Loss in Revenue Optimal Auctions}
\author{
Vineet Abhishek\thanks{Department of Electrical and Computer Engineering, University of Illinois at Urbana-Champaign. \newline Contact: {\tt \small abhishe1@illinois.edu}}
~and Bruce Hajek\thanks{Department of Electrical and Computer Engineering, University of Illinois at Urbana-Champaign. \newline Contact: {\tt \small b-hajek@illinois.edu}}
}
\begin{document}
\date{September 12, 2010}
\maketitle

\begin{abstract}
We study efficiency loss in Bayesian revenue optimal auctions. We quantify this as the worst case ratio of loss in the realized social welfare to the social welfare that can be realized by an efficient auction. Our focus is on auctions with single-parameter buyers and where buyers' valuation sets are finite. For binary valued single-parameter buyers with independent (not necessarily identically distributed) private valuations, we show that the worst case efficiency loss ratio (ELR) is no worse than it is with only one buyer; moreover, it is at most $1/2$. Moving beyond the case of binary valuations but restricting to single item auctions, where buyers' private valuations are independent and identically distributed, we obtain bounds on the worst case ELR as a function of number of buyers, cardinality of buyers' valuation set, and ratio of maximum to minimum possible values that buyers can have for the item.
\end{abstract}


\section{Introduction}
The two prevalent themes in auction theory are revenue maximization for seller - referred to as \textit{optimality}, and social welfare maximization - referred to as \textit{efficiency}. For example, $\text{VCG \cite{Clarke71,Groves73,Vickrey61}}$ is the most widely studied efficient auction, while a Bayesian optimal single item auction for independent private value model was first characterized by Myerson in his seminal work~\cite{Myerson81}. VCG has been generalized for combinatorial auctions (see \cite{AusubelMilgrom06} for a description); Myerson's optimal auction framework has been extended to a more general \textit{single-parameter} setting (see \cite{Hartline&Karlin07} for a description), and to auctions with single-minded buyers \cite{Abhishek10A}.

An allocation of items among buyers generates value for the items. The realized social welfare is defined as the total generated value. This is an upper bound on the revenue that a seller can extract\footnote{This follows from the individual rationality assumption defined later in this paper.}. Thus, an allocation that creates a large social welfare might appear as a precursor to extracting large revenue; the seller can extract more revenue by first creating a large total value for the items and then collecting a part of it as payments from the buyers. However, in general, an optimal auction is not efficient and vice versa. As presented in \cite{Myerson81}, in optimal single item auctions where buyers' private valuations are drawn independently from the same distribution (referred to as \textit{same priors} from here on), the seller sets a common reserve price and does not sell the item if the values reported by all buyers are below the reserve price. When buyers' private values are realized from different distributions (referred to as \textit{different priors} from here on), then not only can the reserve prices be different for different buyers, the seller need not always sell the item to the buyer with the highest reported value. However, an efficient auction like VCG will award the item to the buyer who values it the most in all such scenarios. Moreover, as we show later in this paper, in multiple item auctions with single-parameter buyers, an optimal auction need not be efficient, even if the buyers have the same priors and there are no reserve prices.

We study how much an optimal auction loses in efficiency when compared with an efficient auction. Our metric is the worst case normalized difference in the realized social welfares by an efficient auction and an optimal auction, where the normalization is with respect to the social welfare realized by an efficient auction. The worst case is taken over all possible probability distributions on buyers' valuations. We refer to this as the worst case efficiency loss ratio (henceforth ELR). This ratio quantifies how much the goal of revenue maximization can be in conflict with the social goal of welfare maximization. 

Two previous works that also study the trade-off between optimality and efficiency are \cite{Bulow&Klemperer96} and \cite{Aggarwal09}. However, the  metrics used by \cite{Bulow&Klemperer96} and \cite{Aggarwal09} are the number of extra buyers required by an efficient auction to match an optimal auction in revenue, and the number of extra buyers required by an optimal auction to match an efficient auction in the realized social welfare, respectively. This is fundamentally different from the problem we study here. In \cite{Correa2009}, authors find bounds on the informational cost introduced by the presence of private information (see Section \ref{sec:discussion} for its relationship with the ELR) for a class of resource allocation problems, but for continuous probability distributions on the cost of resources, and under some restrictive assumptions on the probability distributions.

The main contributions of this paper are following. We first focus on optimal auctions with binary valued single-parameter buyers with different priors. We show that the worst case ELR is no worse than it is with only one buyer; moreover, it is at most $1/2$. A tighter bound is obtained for auctions with identical items and buyers with same priors. Moving beyond the case of binary valuations, we focus on single item optimal auctions where buyers have same priors. We reduce the problem of finding the worst case ELR into a relatively simple optimization problem involving only the common probability vector of buyers. For single buyer case, we find a closed form expression for the worst case ELR as function of $r$ - the ratio of the maximum to the minimum possible value of the item for the buyer, and $K$ - the number of discrete values that the buyer can have for the item. For multiple buyers, we provide lower and upper bounds on the worst case ELR as a function of $r$, $K$, and $N$ - the number of buyers. These bounds are tight asymptotically as~$K$ goes to infinity. We also show that when buyers have different priors, the lower bound on the worst case ELR is almost the same as the worst case ELR with only one buyer.

The rest of the paper is organized as follows. Section \ref{sec:model} outlines our model, notation, and definitions. Section \ref{sec:opt-auction} summarizes the structure of optimal auctions. Section \ref{sec:elr} formally defines the problem under investigation and presents the bounds on the ELR. Section \ref{sec:discussion} provides some comments and Section \ref{sec:conclusion} gives a summary of results.

\section{Model, Definitions, and Notation} \label{sec:model}
Consider $N$ buyers competing for a set of items that a seller wants to sell. The set of buyers is denoted by $\setN \triangleq \{1,2,\ldots,N\}$. A buyer is said to be a \textit{winner} if he gets any one of his desired bundles of items. We restrict to single-parameter buyers - a buyer $n$ gets a positive value $v_n^{*}$ if he is a winner, irrespective of the bundle he gets; otherwise, he gets zero value. The bundles desired by the buyers are publicly known. The value $v_n^{*}$ is referred to as the \textit{type} of buyer $n$. The type of a buyer is known only to him and constitutes his private information.

For each buyer $n$, the seller and the other buyers have imperfect information about his true type~$v_n^*$; they model it by a discrete random variable $X_n$. The probability distribution of $X_n$ is common knowledge. $X_n$ is assumed to take values from a set $\setX_n \triangleq \{x_n^1, x_n^2,\ldots,x_n^{K_n}\}$ of cardinality~$K_n$, where $0 \leq x_n^1 < x_n^2 < \ldots < x_n^{K_n}$. The probability that $X_n$ is equal to~$x_n^i$ is denoted by~$p_n^i$; i.e., $p_n^i \triangleq \Prob{X_n = x_n^i}$. We assume that $p_n^i > 0$ for all $n \in \setN$ and $1 \leq i \leq K_n$. Thus, the type~$v_n^*$ can be interpreted as a specific realization of the random variable $X_n$, known only to buyer $n$. Random variables $[X_n]_{n \in \setN}$ are assumed to be independent\footnote{This is referred to as the independent private value model, a fairly standard model in auction theory.}. 

In general, the structure of the problem restricts the possible sets of winners. Such constraints are captured by defining a set $\setA$ to be the collection of all possible sets of winners; i.e., $A \in \mc{A}$ if $A \subseteq \setN$ and all buyers in $A$ can win simultaneously. We assume that $\emptyset \in \setA$, and $\setA$ is \textit{downward closed}; i.e., if $A \in \setA$ and $B \subseteq A$, then $B \in \setA$. Also, assume that for each buyer~$n$, there is a set $A \in \setA$ such that $n \in A$.

The single-parameter model is rich enough to capture many scenarios of interest. In single item auctions, a buyer gets a certain positive value if he wins the item and zero otherwise. Here, $\setA$ consists of all singletons $\{n\}$, $n \in \setN$ (and empty set $\emptyset$). In an auction of $S$ identical items, each buyer wants any one of the $S$ items and has the same value for any one of them. Here, $\setA$ is any subset of buyers of size at most $S$. Similarly, in auctions with single-minded buyers with known bundles\footnote{For single-minded buyers, both the desired bundle of items and its value for a buyer are his private information. However, if the bundles are known then this reduces to single-parameter model.}, each buyer~$n$ is interested only in a specific (known) bundle~$b_n^{*}$ of items and has a value~$v_n^{*}$ for any bundle $b_n$ such that $b_n$ contains the bundle $b_n^{*}$, while he has zero value for any other bundle. Here,~$\setA$ is collections of buyers with disjoint bundles. 

Denote a typical \textit{reported type} (henceforth, referred to as a \textit{bid}) of a buyer $n$ by $v_n$, where $v_n \in \mc{X}_n$, and let $\mb{v} \triangleq (v_1, v_2, \ldots, v_N)$ be the vector of bids of everyone. Define $\mb{X} \triangleq (X_1, X_2, \ldots, X_N)$ and $\bs{\setX} \triangleq \setX_1 \times \setX_2 \times \ldots \times \setX_N$. We use the standard notation of $\mb{v}_{-n} \triangleq (v_1, \ldots, v_{n-1}, v_{n+1}, \ldots, v_N)$ and $\mb{v} \triangleq (v_n, \mb{v}_{-n})$. Similar interpretations are used for $\mb{X}_{-n}$ and $\bs{\setX}_{-n}$. Henceforth, in any further usage, $v_n$, $\mb{v}_{-n}$, and $\mb{v}$ are always in the sets $\setX_n$, $\bs{\setX}_{-n}$, and $\bs{\setX}$ respectively. Let $\mb{x}_n \triangleq (x_n^1, x_n^2, \ldots, x_n^{K_n})$, $\mb{x}_{1:N} \triangleq (\mb{x}_1, \mb{x}_2, \ldots, \mb{x}_N)$, and define $\mb{p}_n$ and $\mb{p}_{1:N}$ similarly.

\section{Preliminaries on Finite Support Bayesian Optimal Auction} \label{sec:opt-auction}
In this section we summarize the structure of an optimal auction for single-parameter buyers. The presentation here is based on \cite{Abhishek10A}, adapted for single-parameter buyers. Readers are referred to \cite{Abhishek10A} and references therein (in particular \cite{Elkind07}) for further details. We will be focusing only on the auction mechanisms where buyers are asked to report their types directly (referred to as \textit{direct mechanism}). In light of the \textit{revelation principle}\footnote{Revelation principle says that, given a mechanism and a Bayes-Nash equilibrium (BNE) for that mechanism, there exists a direct mechanism in which truth-telling is a BNE, and allocation and payment outcomes are same as in the given BNE of the original mechanism.} \cite{Myerson81}, the restriction to direct mechanisms is without any loss of optimality. 

A direct auction mechanism is specified by an allocation rule $\bs{\pi}: \bs{\setX} \mapsto [0,1]^{|\setA|}$, and a payment rule $\mb{M}:\bs{\setX} \mapsto \R^N$. Given a bid vector $\mb{v}$, the allocation rule $\bs{\pi}(\mb{v}) \triangleq [\pi_A(\mb{v})]_{A \in \setA}$ is a probability distribution over the collection ${\cal A}$ of possible sets of winners. For each $A \in \setA$, $\pi_A(\mb{v})$ is the probability that the set of buyers $A$ win simultaneously. The payment rule is defined as $\mb{M} \triangleq (M_1,M_2,\ldots,M_N)$, where $M_n(\mb{v})$ is the payment (expected payment in case of random allocation) that buyer~$n$ makes to the seller when the bid vector is $\mb{v}$. Let $Q_n(\mb{v})$ be the probability that buyer $n$ wins in the auction when the bid vector is $\mb{v}$; i.e,
\beq{winning-prob}
Q_n(\mb{v}) \triangleq \sum_{A \in \setA : n \in A} \pi_A(\mb{v}).
\eeq

Given that the value of buyer $n$ is $v_n^{*}$, and the bid vector is $\mb{v}$, the payoff (expected payoff in case of random allocation) of buyer $n$ is:
\beq{payoff}
\sigma_n(\mb{v}; v_n^{*}) \triangleq Q_n(\mb{v})v_n^{*} - M_n(\mb{v}).
\eeq
So buyers are assumed to be risk neutral and have quasilinear payoffs (a standard assumption in auction theory). The mechanism $(\bs{\pi},\mb{M})$ and the payoff functions $[\sigma_n]_{n \in \setN}$ induce a game of incomplete information among the buyers. The seller's goal is to design an auction mechanism $(\bs{\pi},\mb{M})$ to maximize his expected revenue at a Bayes-Nash equilibrium (BNE) of the induced game. Again, using the revelation principle, the seller can restrict only to the auctions where truth-telling is a BNE (referred to as \textit{incentive compatibility}) without any loss of optimality.

For the above revenue maximization problem to be well defined, assume that the seller cannot force the buyers to participate in an auction and impose arbitrarily high payments on them. Thus, a buyer will voluntarily participate in an auction only if his payoff from participation is nonnegative (referred to as \textit{individual rationality}). The seller is assumed to have free disposal of items and may decide not to sell some or all items for certain bid vectors.

The idea now, as in \cite{Myerson81}, is to express incentive compatibility, individual rationality, and feasible allocations as mathematical constraints, and formulate the revenue maximization objective as an optimization problem under these constraints. To this end, for each $n \in \setN$, define the following functions:
\beq{q}
q_n(v_n) \triangleq \E{Q_n(v_n,\mb{X}_{-n})},
\eeq
\beq{m}
m_n(v_n) \triangleq \E{M_n(v_n,\mb{X}_{-n})},
\eeq
Here, $q_n(v_n)$ is the expected probability that buyer $n$ wins given that he reports his type as $v_n$ while everyone else is truthful. The expectation is over the type of everyone else; i.e., over $\mb{X}_{-n}$. Similarly, $m_n(v_n)$ is the expected payment that buyer $n$ makes to the seller. The incentive compatibility and individual rationality constraints can be expressed mathematically as follows:
\begin{enumerate}

\item
\textit{Incentive compatibility (IC)}: For any $n \in \setN$, and $1 \leq i, j \leq K_n$,
\beq{ic}
q_n(x_n^i)x_n^i - m_n(x_n^i) \geq q_n(x_n^j)x_n^i - m_n(x_n^j).
\eeq
Notice that, given $X_n = x_n^i$, the left side of \eqref{eq:ic} is the payoff of buyer $n$ from reporting his type truthfully (assuming everyone else is also truthful), while the right side is the payoff from misreporting his type to $x_n^j$. 

\item
\textit{Individual rationality (IR)}: For any $n \in \setN$, and $1 \leq i \leq K_n$,
\beq{ir}
q_n(x_n^i)x_n^i - m_n(x_n^i) \geq 0.
\eeq

\end{enumerate}

Under IC, all buyers report their true types. Hence, the expected revenue that the seller gets is $\mathbb{E}\big[\sum_{n=1}^{N}M_n(\mb{X})\big]$. The expectation here is over the random vector~$\mb{X}$. Thus, the optimal auction problem is to maximize the seller's expected revenue, $\mathbb{E}\big[\sum_{n=1}^{N}M_n(\mb{X})\big]$, subject to the IC and IR constraints.

Define the \textit{virtual-valuation} function, $w_n$, of a buyer $n$ as: 
\beq{virtual-bid}
w_n(x_n^i) = \left\{ 
\begin{array}{l l}
  \displaystyle x_n^i - (x_n^{i+1} - x_n^i)\frac{(\sum_{j=i+1}^{K_n}p_n^j)}{p_n^i} & \quad \text{if $1 \leq i\leq K_n -1$,}\\
  x_n^{K_n} & \quad \text{if $i = K_n$.}\\
\end{array} \right.
\end{equation}
The virtual-valuation function $w_n$ is said to be \textit{regular} if $w_n(x_n^i) \leq w_n(x_n^{i+1})$ for $1 \leq i \leq K_n -1$. The proposition below identifies the maximum expected revenue for a given allocation rule, over all payment rules that meet the IC and IR constraints. In particular, it identifies whether such payment rules exist.

\begin{proposition}[from \cite{Abhishek10A}] \label{proposition:opt-revenue}
Let $\bs{\pi}$ be an allocation rule and let $[Q_n]_{n \in \setN}$ and $[q_n]_{n \in \setN}$ be obtained from $\bs{\pi}$ by \eqref{eq:winning-prob} and \eqref{eq:q}. A payment rule satisfying the IC and IR constraints exists for $\bs{\pi}$ if and only if $q_n(x_n^i) \leq q_n(x_n^{i+1})$ for all $n \in \setN$ and $1\leq i \leq K_n-1$. Given such $\bs{\pi}$ and a payment rule $\mb{M}$ satisfying the IC and IR constraints, the seller's revenue satisfies:
\beqn
\E{\sum_{n=1}^{N}M_n(\mb{X})} \leq \E{\sum_{n=1}^{N}Q_n(\mb{X})w_n(X_n)}.
\eeqn
Moreover, a payment rule $\mb{M}$ achieving this bound exists, and any such $\mb{M}$ satisfies: 
\beqn
m_n(x_n^i) = \sum_{j = 1}^{i}(q_n(x_n^j) - q_n(x_n^{j-1}))x_n^j,
\eeqn
for all $n \in \setN$ and $1\leq i \leq K_n$, where we use the notational convention $q_n(x_n^{0}) \triangleq 0$.
\end{proposition}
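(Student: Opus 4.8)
The plan is to fix a buyer $n$, suppress the subscript $n$ throughout, and write $q_i=q_n(x_n^i)$, $m_i=m_n(x_n^i)$, $x_i=x_n^i$, $p_i=p_n^i$, $K=K_n$. I would introduce the interim payoff $U_i\triangleq q_ix_i-m_i$, so that IR reads $U_i\geq 0$ and IC reads $U_i\geq U_j+q_j(x_i-x_j)$ for all $i,j$. First I would extract from IC the two adjacent inequalities (type $i$ not wanting to mimic type $i+1$, and type $i+1$ not wanting to mimic type $i$), which together give $q_i(x_{i+1}-x_i)\leq U_{i+1}-U_i\leq q_{i+1}(x_{i+1}-x_i)$; since $x_{i+1}>x_i$ this forces $q_i\leq q_{i+1}$, proving that monotonicity of $q_n$ is necessary for the existence of an IC/IR payment rule.

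For sufficiency, assuming $q_i\leq q_{i+1}$, I would exhibit the candidate rule $U_1=0$ and $U_i=\sum_{j=1}^{i-1}q_j(x_{j+1}-x_j)$, equivalently $m_i=q_ix_i-\sum_{j=1}^{i-1}q_j(x_{j+1}-x_j)$, and verify directly that it satisfies all (global) IC and IR constraints. Monotonicity of $U$ (hence $U_i\geq U_1=0$) gives IR; for IC one checks the cases $i>j$ and $i<j$ separately by telescoping $U_i-U_j$ as a sum of terms $q_k(x_{k+1}-x_k)$ over the intervening indices and bounding each $q_k$ by $q_j$ using monotonicity in the appropriate direction. A summation-by-parts (Abel) manipulation then shows this $m_i$ equals $\sum_{j=1}^i(q_j-q_{j-1})x_j$ with the convention $q_0=0$.

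Next I would turn to revenue. Using independence of $[X_n]_{n\in\setN}$, $\E{\sum_{n=1}^N M_n(\mb{X})}=\sum_n\sum_i p_n^i m_n(x_n^i)$ and likewise $\E{Q_n(\mb{X})w_n(X_n)}=\sum_i p_n^i q_n(x_n^i)w_n(x_n^i)$ (condition on $X_n$ and use $\E{Q_n(v_n,\mb{X}_{-n})}=q_n(v_n)$), so it suffices to bound each buyer's contribution. Writing $m_i=q_ix_i-U_i$ and using the lower bound $U_i\geq\sum_{j<i}q_j(x_{j+1}-x_j)$ (which follows from the adjacent IC inequality $U_{i+1}-U_i\geq q_i(x_{i+1}-x_i)$ together with $U_1\geq 0$ from IR), one gets $\sum_i p_i m_i\leq\sum_i p_i q_ix_i-\sum_i p_i\sum_{j<i}q_j(x_{j+1}-x_j)$. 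Interchanging the order of the double sum turns the second term into $\sum_{j=1}^{K-1}q_j(x_{j+1}-x_j)\big(\sum_{i>j}p_i\big)$; collecting the coefficient of $q_i$ for $1\leq i\leq K-1$ produces exactly $p_i\,w_n(x_n^i)$ and the leftover term is $p_Kq_Kx_K=p_Kq_Kw_n(x_n^K)$, so the bound equals $\sum_i p_iq_iw_n(x_n^i)$, which is the claimed inequality.

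Finally, for the equality/uniqueness claim: since every $p_i>0$, equality in the revenue bound forces $U_i=\sum_{j<i}q_j(x_{j+1}-x_j)$ for every $i$, which is precisely the rule constructed above; hence the bound is attained and every revenue-maximizing IC/IR payment rule has $m_n(x_n^i)=\sum_{j=1}^i(q_n(x_n^j)-q_n(x_n^{j-1}))x_n^j$ (one may realize it concretely by the bid-independent rule $M_n(\mb{v})\triangleq m_n(v_n)$). I expect the main obstacle to be the bookkeeping in the revenue step — the interchange of summations and the algebraic regrouping that has to reproduce the virtual-valuation function exactly, including correctly absorbing the $i=K$ boundary term where $w_n(x_n^K)=x_n^K$ — together with the case analysis verifying that monotonicity upgrades the adjacent IC inequalities to the full set of global IC constraints.
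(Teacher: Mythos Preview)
Your argument is correct and is the standard Myerson-style proof for finite type spaces: adjacent IC gives monotonicity, monotonicity plus the telescoping payoff $U_i=\sum_{j<i}q_j(x_{j+1}-x_j)$ gives global IC and IR, and the summation-by-parts/interchange computation converts $\sum_i p_i m_i$ into $\sum_i p_i q_i w_n(x_n^i)$, with the strict positivity of every $p_n^i$ pinning down $m_n$ uniquely at the optimum. The algebra you flag as the likely trouble spot (the regrouping that produces $w_n$ and the $i=K$ boundary term) goes through exactly as you outline.

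As for comparison: this paper does not actually prove the proposition --- it is quoted verbatim from the companion paper \cite{Abhishek10A} and no argument is supplied here. So there is nothing in the present paper to compare your proof against; your write-up simply fills in the omitted standard derivation.
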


Given $\bs{\pi}$ satisfying the conditions of Proposition \ref{proposition:opt-revenue}, let $R(\bs{\pi})$ denote the maximum revenue to the seller under the IC and IR constraints. From Proposition \ref{proposition:opt-revenue} and \eqref{eq:winning-prob},
\beq{revenue}
R(\bs{\pi}) = \E{\sum_{n=1}^{N}Q_n(\mb{X})w_n(X_n)} = \E{\sum_{A \in \setA}\pi_A(\mb{X})\big(\sum_{n \in A}w_n(X_n)\big)}.
\eeq
The above suggests that an optimal auction can be found by selecting the allocation rule $\bs{\pi}$ (and in turn $[Q_n]_{n \in \setN}$ and $[q_n]_{n \in \setN}$) that assigns nonzero probabilities only to the feasible sets of winners with the maximum total virtual valuations for each bid vector $\mb{v}$. If all $w_n$'s are regular, then it can be verified that such an allocation rule satisfies the monotonicity condition on the $q_n$'s needed by Proposition~\ref{proposition:opt-revenue}. However, if the $w_n$'s are not regular, the resulting allocation rule would not necessarily satisfy the required monotonicity condition on the $q_n$'s. This problem can be remedied by using another function, $\overline{w}_n$, called the \textit{monotone virtual valuation} (henceforth MVV), constructed graphically as follows. 

Let $(g_n^0,h_n^0) \triangleq (0,-x_n^1)$, $(g_n^i,h_n^i) \triangleq \big(\sum_{j = 1}^{i} p_n^j, -x_n^{i+1}(\sum_{j = i+1}^{K_n} p_n^j)\big)$ for $1 \leq i \leq K_n-1$, and $(g_n^{K_n},h_n^{K_n}) \triangleq (1,0)$. Then, $w_n(x_n^i)$ is the slope of the line joining the point $(g_n^{i-1},h_n^{i-1})$ to the point $(g_n^i,h_n^i)$; i.e., $w_n(x_n^i) = (h_n^i-h_n^{i-1})/(g_n^i-g_n^{i-1})$. Find the lower convex hull of points $[(g_n^i,h_n^i)]_{0 \leq i \leq K_n}$, and let $\overline{h}_n^i$ be the point on this convex hull corresponding to $g_n^i$. Then, $\overline{w}_n(x_n^i)$ is the slope of the line joining the point $(g_n^{i-1},\overline{h}_n^{i-1})$ to the point $(g_n^i,\overline{h}_n^i)$; i.e, $\overline{w}_n(x_n^i) =(\overline{h}_n^i-\overline{h}_n^{i-1})/(g_n^i-g_n^{i-1})$. Notice that, $\overline{w}_n(x_n^i) \leq \overline{w}_n(x_n^{i+1})$ for all $n \in \setN$ and $1 \leq i \leq K_n-1$. Also, if $w_n$ is regular, $\overline{w}_n$ is equal to~$w_n$. 

The process of finding virtual valuations and monotone virtual valuations can be explained using Figure \ref{fig:vv-mvv}. Since the virtual-valuation function of a buyer depends only on the probability distribution of his type, we describe the scheme for a typical random variable $X$, where we have dropped the subscript. Suppose that $X$ takes four different values $\{x^1,x^2,x^3,x^4\}$ with corresponding probabilities $\{p^1,p^2,p^3,p^4\}$. Draw vertical lines separated from each other by distances $p^1$, $p^2$, $p^3$, and $p^4$ as shown in the figure. For each $1 \leq i \leq 4$, join the point $-x^i$ on the y-axis to the x-axis at $1$ (sum of probabilities). Call such line as line $i$. Then, $(g^0,h^0)=(0,-x^1)$ and $(g^4,h^4)=(1,0)$. The intersection of line $2$ with the first vertical line is the point $(g^1,h^1)$. Similarly, the intersection of line $3$ with the second vertical line is the point $(g^2,h^2)$ and so on. Virtual-valuation function $w$ is given by the slopes of the lines connecting these points. For the case shown in the figure, $w(x^1) > w(x^2)$ and hence virtual-valuation function is not regular. Here, the the lower convex hull of the points $(g^i,h^i)$'s is taken. The slopes of individual segments of this convex hull give the monotone virtual valuation $\overline{w}(x^i)$. This is equivalent to replacing $w(x^1)$ and $w(x^2)$ by their weighted mean, i.e., $\overline{w}(x^1) = \overline{w}(x^2) = (p^1w(x^1) + p^2w(x^2))/(p^1 + p^2)$. 

\begin{figure}[h] 
\begin{center}
\includegraphics[trim=0.9in 0.9in 0.9in 0.9in, clip=true, height=5.5in,angle=270]{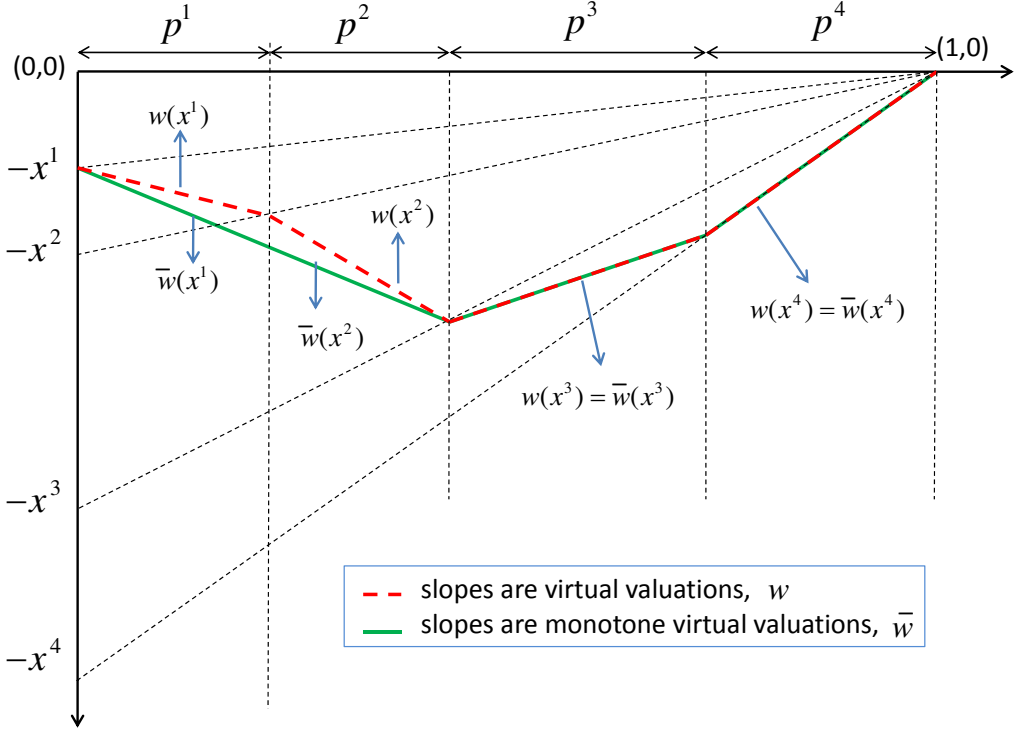}
\caption{\small \sl Virtual valuations and monotone virtual valuations as the slopes of the graph.\label{fig:vv-mvv}} 
\end{center} 
\end{figure}

The following proposition establishes the optimality of the allocation rule obtained by using~$\overline{w}_n$.

\begin{proposition}[from \cite{Abhishek10A}] \label{proposition:monotone-bid-opt}
Let $\bs{\pi}$ be any allocation rule satisfying the conditions of Proposition \ref{proposition:opt-revenue} and let $[Q_n]_{n \in \setN}$ be obtained from $\bs{\pi}$ by \eqref{eq:winning-prob}. Then, 
\beq{monotone-bid-ineq}
\E{\sum_{n=1}^{N}Q_n(\mb{X})w_n(X_n)} \leq \E{\sum_{n=1}^{N}Q_n(\mb{X})\overline{w}_n(X_n)}, 
\eeq
where equality is achieved by any allocation rule $\bs{\pi}$ that maximizes $\sum_{n=1}^{N}Q_n(\mb{v})\overline{w}_n(v_n)$ for each bid vector $\mb{v}$. 
\end{proposition}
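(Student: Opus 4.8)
The plan is to prove the inequality \eqref{eq:monotone-bid-ineq} one buyer at a time, reducing each side to a one-dimensional sum and then applying summation by parts, with the geometry of the lower convex hull supplying the sign. Fix $n \in \setN$. By independence of the $X_m$'s and the definition \eqref{eq:q} of $q_n$, conditioning on $X_n$ gives, for any function $g$ on $\setX_n$,
\beqn
\E{Q_n(\mb X)\, g(X_n)} = \sum_{i=1}^{K_n} p_n^i\, q_n(x_n^i)\, g(x_n^i).
\eeqn
Applying this with $g = w_n$ and with $g = \overline w_n$ reduces \eqref{eq:monotone-bid-ineq} to proving $\sum_{i} p_n^i q_n(x_n^i) w_n(x_n^i) \le \sum_{i} p_n^i q_n(x_n^i) \overline w_n(x_n^i)$ for each $n$, and then summing over $n \in \setN$. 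Note this reduction uses only independence, not the hypotheses of Proposition~\ref{proposition:opt-revenue}.

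Next I would rewrite both sides in terms of the hull-defining points. Since $g_n^i - g_n^{i-1} = p_n^i$, the definitions give $p_n^i w_n(x_n^i) = h_n^i - h_n^{i-1}$ and $p_n^i \overline w_n(x_n^i) = \overline h_n^i - \overline h_n^{i-1}$; moreover $h_n^0 = \overline h_n^0$ and $h_n^{K_n} = \overline h_n^{K_n}$ (the endpoints lie on the hull), and $\delta_n^i \triangleq h_n^i - \overline h_n^i \ge 0$ for all $i$ since each point lies on or above its lower convex hull. Substituting and using Abel summation, the boundary terms cancel (because $\delta_n^0 = \delta_n^{K_n} = 0$) and one is left with
\beqn
\sum_{i=1}^{K_n} p_n^i\, q_n(x_n^i)\bigl(\overline w_n(x_n^i) - w_n(x_n^i)\bigr) \;=\; \sum_{i=1}^{K_n - 1} \delta_n^i\bigl(q_n(x_n^{i+1}) - q_n(x_n^i)\bigr).
\eeqn
Because $\bs\pi$ satisfies the conditions of Proposition~\ref{proposition:opt-revenue} we have $q_n(x_n^{i+1}) \ge q_n(x_n^i)$, and $\delta_n^i \ge 0$, so every term on the right is nonnegative; summing over $n$ yields \eqref{eq:monotone-bid-ineq}.

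For the equality claim I would show that a pointwise maximizer makes each term $\delta_n^i(q_n(x_n^{i+1}) - q_n(x_n^i))$ vanish. The key geometric observation is that $\delta_n^i > 0$ means $(g_n^i, h_n^i)$ lies strictly above its lower convex hull, hence $g_n^i$ is not a hull vertex and lies in the relative interior of a hull edge, which forces $\overline w_n(x_n^i) = \overline w_n(x_n^{i+1})$ (equal slopes on either side of $g_n^i$). So it suffices to show $q_n(x_n^i) = q_n(x_n^{i+1})$ whenever $\overline w_n(x_n^i) = \overline w_n(x_n^{i+1})$. I would take a pointwise maximizer $\bs\pi$ whose choice among the maximizing sets in $\setA$ depends only on the profile $(\overline w_1(v_1),\dots,\overline w_N(v_N))$ (a fixed tie-breaking rule): then for every $\mb v_{-n}$ the optimization at $(x_n^i,\mb v_{-n})$ and at $(x_n^{i+1},\mb v_{-n})$ has the same objective and the same tie-break, so $Q_n(x_n^i,\mb v_{-n}) = Q_n(x_n^{i+1},\mb v_{-n})$ and thus $q_n(x_n^i) = q_n(x_n^{i+1})$. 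One must also verify that such a $\bs\pi$ satisfies the monotonicity hypothesis of Proposition~\ref{proposition:opt-revenue}: fixing $\mb v_{-n}$ and letting $A$, $B$ be the chosen maximizing sets at $(x_n^i,\mb v_{-n})$ and $(x_n^{i+1},\mb v_{-n})$, adding the two optimality inequalities for $\sum_{m\in A}\overline w_m$ and $\sum_{m\in B}\overline w_m$ evaluated at both bids gives $\overline w_n(x_n^{i+1}) \le \overline w_n(x_n^i)$ unless $n\in A \Rightarrow n\in B$; combined with monotonicity of $\overline w_n$ this yields $Q_n(x_n^i,\mb v_{-n}) \le Q_n(x_n^{i+1},\mb v_{-n})$ pointwise, hence $q_n(x_n^i)\le q_n(x_n^{i+1})$.

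I expect the inequality to be routine once the convex-hull picture is set up; the delicate part is the equality statement. Specifically, the subtlety is the behavior of $\bs\pi$ on the "ironed" (flat) stretches of $\overline w_n$: one has to choose the maximizing allocation coherently enough that $q_n$ is genuinely constant there, since an arbitrary tie-breaking among maximizing sets can keep the cross terms $\delta_n^i(q_n(x_n^{i+1}) - q_n(x_n^i))$ strictly positive while still being a pointwise maximizer. Getting this, together with checking feasibility (monotone $q_n$) of the chosen rule, is where the real work lies.
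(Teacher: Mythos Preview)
The paper does not prove Proposition~\ref{proposition:monotone-bid-opt}; it is quoted from \cite{Abhishek10A} as a preliminary result, so there is no in-paper argument to compare your attempt against.

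On its own merits, your proof of the inequality is correct and is the standard one: reduce buyer by buyer via independence and \eqref{eq:q}, rewrite $p_n^i w_n(x_n^i)$ and $p_n^i \overline w_n(x_n^i)$ as successive differences $h_n^i-h_n^{i-1}$ and $\overline h_n^i-\overline h_n^{i-1}$, and apply Abel summation so that the endpoint conditions $\delta_n^0=\delta_n^{K_n}=0$, the nonnegativity of $\delta_n^i=h_n^i-\overline h_n^i$, and the monotonicity of $q_n$ from Proposition~\ref{proposition:opt-revenue} finish the job.

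Your caution about the equality clause is also well placed. The literal phrase ``any allocation rule that maximizes $\sum_n Q_n(\mb v)\overline w_n(v_n)$'' is too strong without a tie-breaking convention: on an ironed stretch with $\overline w_n(x_n^i)=\overline w_n(x_n^{i+1})=0$ and $\delta_n^i>0$, a pointwise maximizer may take $Q_n=0$ at $x_n^i$ and $Q_n=1$ at $x_n^{i+1}$ while remaining feasible, leaving $\delta_n^i\bigl(q_n(x_n^{i+1})-q_n(x_n^i)\bigr)>0$. What you actually establish---equality for a maximizer whose choice depends only on the MVV profile, together with a clean verification that such a rule has monotone $q_n$ via the exchange inequality---is exactly what is needed for the optimal-auction construction in Algorithm~\ref{alg:mwa}, and it is correct. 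I would simply make explicit in your write-up that the equality claim is to be read under a tie-breaking rule that is constant on ironed intervals, rather than try to salvage the ``any'' as stated.
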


An optimal auction, which uses the MVVs just defined, is the maximum weight algorithm shown as Algorithm \ref{alg:mwa}. The set~$\mc{W}(\mb{v})$ is the collection of all feasible subsets of buyers with maximum total MVVs for the given bid vector $\mb{v}$. Since $\setA$ is downward closed and $\emptyset \in \setA$, no buyer $n$ with $\overline{w}_n(v_n) < 0$ is included in the set of winners~$W(\mb{v})$. Depending on the tie-breaking rule, a buyer $n$ with $\overline{w}_n(v_n) = 0$ may or may not be included in the set of winners. Assume that only buyers with $\overline{w}_n(v_n) > 0$ are considered. Since $\overline{w}_n(x_n^i) \leq \overline{w}_n(x_n^{i+1})$, the seller equivalently sets a reserve price for each buyer $n$. A buyer whose bid is below his reserve price never wins. From $\cite{Abhishek10A}$, the reserve price $x_n^*$ for buyer $n$ is:
\beq{reserve-price}
x_n^* = \max \bigg\{v_n: ~ v_n \in \argmax_{\widehat{v}_n \in \setX_n} ~\widehat{v}_n\Prob{X_n \geq \widehat{v}_n}\bigg\}.
\eeq
In the example given in Figure \ref{fig:vv-mvv}, this corresponds to the y-intercept of the line through the lowermost point of the graph and the point $(1,0)$, which is $x^3$.

\begin{algorithm}[h]
\floatname{algorithm}{Algorithm}
\caption{Maximum weight algorithm} \label{alg:mwa}
Given a bid vector $\mb{v}$:
\begin{enumerate}
\item
Compute $\overline{w}_n(v_n)$ for each $n \in \setN$.

\item
Take $\bs{\pi}(\mb{v})$ to be any probability distribution on the collection $\mc{W}(\mb{v})$ defined as:
\beqn
\mc{W}(\mb{v}) \triangleq \argmax_{A \in \setA} \sum_{n \in A}\overline{w}_n(v_n).
\eeqn
Obtain the set of winners $W(\mb{v})$ by sampling from $\mc{W}(\mb{v})$ according to $\bs{\pi}(\mb{v})$.

\item
Collect payments given by:
\beqn
M_n(\mb{v}) = \sum_{i:x_n^i \leq v_n}\big(Q_n(x_n^i,\mb{v}_{-n}) - Q_n(x_n^{i-1},\mb{v}_{-n})\big)x_n^i, 
\eeqn
where $Q_n$ is given by \eqref{eq:winning-prob}, and $Q_n(x_n^0,\mb{v}_{-n}) \triangleq 0$. 
\end{enumerate}
\end{algorithm}

\section{Efficiency Loss in Optimal Auctions} \label{sec:elr}
Given any incentive compatible auction mechanism $(\bs{\pi},\mb{M})$, the social welfare realized by the allocation rule $\bs{\pi}$ is $\mathbb{E}\big[\sum_{n=1}^N Q_n(\mb{X})X_n \big]$. From the IR constraint, this is at least $R(\bs{\pi})$. An efficient auction maximizes the realized social welfare. Since
\beq{welfare-eq}
\sum_{n=1}^N Q_n(\mb{X})X_n = \sum_{A \in \setA}\pi_A(\mb{X})\big(\sum_{n \in A}X_n\big),
\eeq
an efficient allocation rule $\bs{\pi}^e(\mb{v})$ is any probability distribution over the set $\argmax_{A \in \setA} \big(\sum_{n \in A}v_n\big)$. It is easy to verify that $\bs{\pi}^e$ satisfies the monotonicity condition needed by Proposition~\ref{proposition:opt-revenue}. The corresponding maximum social welfare (henceforth MSW) is given by:
\beq{msw}
\text{MSW}(\mb{x}_{1:N},\mb{p}_{1:N};\setA) = \mathbb{E}\bigg[\max_{A \in \setA} \big(\sum_{n \in A}X_n\big)\bigg],
\eeq
where $\mb{x}_{1:N}$ and $\mb{p}_{1:N}$ are as defined in Section \ref{sec:model}.

By contrast, an optimal auction, described in Section \ref{sec:opt-auction}, involves maximizing the sum of MVVs instead of the sum of true valuations. Consequently, it differs from an efficient auction in three ways. First, the buyers with negative MVVs (equivalently, their bids are below their respective reserve prices) do not win. Second, even if the bid of one buyer is higher than that of another, their corresponding MVVs might be in a different order. Hence, in single item optimal auctions, the winner is not necessarily the buyer with the highest valuation for the item. Finally, for a multiple item auction with single-parameter buyers, the allocation that maximizes the sum of the MVVs might be different from the one that maximizes the sum of the true valuations. These three differences are highlighted by the following examples:

\begin{example} 
\label{example:eg1}
Consider two i.i.d. buyers competing for one item. Their possible values for the item are $\{ 1,2 \}$ with probabilities $\{ 1/3,2/3 \}$ respectively. An efficient auction, like VCG, will award the item to the highest bidder and charge him the price equal to the second highest bid. Hence, the revenue generated by VCG is $2*(2/3)^2 + 1-(2/3)^2 < 1.45$. However, the optimal auction sets the reserve price equal to $2$ (since $w_n(1) < 0$), and awards the item to any buyer with value $2$. The revenue collected by the optimal auction is $2*(1-(1/3)^2) > 1.77 $. Notice that unlike VCG, the item is not sold when both the buyers have their values equal to $1$. Hence, the optimal auction loses in efficiency. 
\end{example}

\begin{example} 
\label{example:eg2}
Consider two buyers competing for one item. Buyer $1$ takes values $\{5,10\}$ each with probability $0.5$. Buyer $2$ takes values $\{1,2\}$, independent of buyer $1$, each with probability~$0.5$. An efficient auction will always award the item to buyer $1$. Any auction that always awards the item to buyer $1$ cannot him charge more than $5$, else buyer $1$ will misreport his value. Now consider another auction that gives the item to buyer $1$ only if he bids $10$ and charges him $10$, otherwise, the item is given to buyer $1$ at the price $1$. It is easy to see that this auction is incentive compatible. The revenue that this auction generates is $0.5*10 + 0.5*1 = 5.5$. Since the optimal auction must extract at least this much revenue, it cannot always award the item to the buyer $1$. In fact, it can be verified that the second auction is indeed optimal. By not awarding the item to the buyer with the highest value for it, the optimal auction again loses in efficiency.
\end{example}

\begin{example} 
\label{example:eg3}
Consider $3$ single-minded buyers with known bundles competing for $4$ items. Buyer~$1$ wants the items $(A, B)$, buyer $2$ wants the items $(B, C)$, and buyer $3$ wants the items $(C, D)$. Thus, buyers $(1,3)$ and buyer $2$ cannot get their respective bundles simultaneously. Buyers are i.i.d. with values $\{1,8/5\}$, each with probability $0.5$. Suppose that their true values are $(1,8/5,1)$ respectively. An efficient auction, like VCG, will select buyers $1$ and $3$ as winners since this maximizes the total value of the allocation. However, since $2*w_n(1) = 2*2/5 < w_n(8/5) = 8/5$, the optimal auction will select buyer $2$ as the winner. Again, there is a loss in efficiency because the optimal allocation is not necessarily the efficient one.
\end{example}

The social welfare realized by an optimal allocation cannot be more than the MSW. We quantify how much an optimal allocations loses in the realized social welfare when compared with the MSW. We normalize this loss in the realized welfare by the MSW. Let $\bs{\pi}^o$ be an optimal allocation rule given by Algorithm \ref{alg:mwa}, and let $[Q^o_n]_{n \in \setN}$ be obtained from $\bs{\pi}^o$ by \eqref{eq:winning-prob}. Given a random vector $\mb{X}$ denoting valuations of buyers, we define efficiency loss ratio (ELR) as:
\beq{elr}
\text{ELR}(\bs{\pi}^o,\mb{x}_{1:N},\mb{p}_{1:N};\setA) \triangleq \frac{\text{MSW}(\mb{x}_{1:N},\mb{p}_{1:N};\setA) - \E{\sum_{n=1}^N Q^o_n(\mb{X})X_n}}{\text{MSW}(\mb{x}_{1:N},\mb{p}_{1:N};\setA)}.
\eeq

Recalling step $2$ of Algorithm \ref{alg:mwa}, any optimal allocation rule~$\bs{\pi}^o$ is a probability distribution on~$\mc{W}(\mb{v})$. Different probability distributions on $\mc{W}(\mb{v})$ correspond to different tie-breaking rules for selecting a set of winners $W(\mb{v}) \in \mc{W}(\mb{v})$\footnote{The tie-breaking rule must be consistent in the following sense: let $v_n$ and $\widehat{v}_n$ be such that $v_n < \widehat{v}_n$, but $\overline{w}_n(v_n) = \overline{w}_n(\widehat{v}_n)$, then $\Prob{n \in W(v_n,\mb{v}_{-n})} \leq \Prob{n \in W(\widehat{v}_n,\mb{v}_{-n})}$ for any $\mb{v}_{-n}$.}. They result in the same expected revenue but different realized social welfare. Since the tie-breaking rule is determined by the auction designer (or the seller), we break tie in the favor of the allocation rule that maximizes the social welfare realized within the set of optimal allocations (see Section \ref{sec:discussion} for a related discussion). Call the resulting allocation rule $\widetilde{\bs{\pi}}^o$.

Given $r > 1$ and a positive integer $K$, define $\mc{D}_{r,K}$ as the set of $(\mb{x}_{1:N},\mb{p}_{1:N})$ satisfying the following properties: 
\begin{enumerate}
\item
For each $n \in \setN$, $0 < x_n^1 < x_n^2 < \ldots < x_n^{K_n}$, and $\mb{p}_n$ is a valid probability vector of dimension~$K_n$, where $K_n \leq K$,
\item
$(\max_{n\in\setN}~x_n^{K_n})/(\min_{n\in\setN}~x_n^1) \leq r$,
\item
For all $n \in \setN$ and $1 \leq i \leq K_n$, $p_n^i > 0$.
\end{enumerate}
The worst case ELR, denoted by $\eta(r,K;\setA)$, is defined as:
\beq{worst-elr}
\eta(r,K;\setA) \triangleq \sup_{(\mb{x}_{1:N},\mb{p}_{1:N}) \in \mc{D}_{r,K}} \text{ELR}(\widetilde{\bs{\pi}}^o,\mb{x}_{1:N},\mb{p}_{1:N};\setA).
\eeq

The MSW is continuous in $\mb{x}_{1:N}$ and $\mb{p}_{1:N}$. A slight perturbation in $x_n^i$'s or in $p_n^i$'s can make the MVVs that are zero negative, but still very close to zero, while causing a very small change in the MSW. Hence, even if we restrict to optimal allocations that only include the buyers with positive MVVs, the supremum in \eqref{eq:worst-elr} remains unchanged. Consequently, in the subsequent treatment, for the ease of analysis, we will confine to an efficient allocation within the set of optimal allocations that only includes the buyers with positive MVVs. For notational convenience, we drop $\mb{x}_{1:N}$, $\mb{p}_{1:N}$, and $\setA$ from the arguments of the MSW and ELR functions defined by \eqref{eq:msw} and \eqref{eq:elr} whenever the underlying $\mb{x}_{1:N}$, $\mb{p}_{1:N}$, and $\setA$ are clear from the context.

\subsection{Auctions with binary valued single-parameter buyers} \label{sec:elr-mi-bin-val}
We first bound the worst case ELR for optimal auctions with binary valued single-parameter buyers. Assume that each random variable $X_n$ takes only two values, $H_n$ and $L_n$, with probabilities $p_n$ and $1 - p_n$ respectively. Here, $H_n > L_n > 0$. The virtual-valuation function $w_n$ is given by $w_n(H_n) = H_n$ and $w_n(L_n) = (L_n - p_nH_n)/(1-p_n)$. Clearly, $w_n(L_n) < w_n(H_n)$, and hence, $w_n = \overline{w}_n$. The reserve price $x_n^*$ for buyer $n$ is $H_n$ if $p_nH_n \geq L_n$, otherwise it is equal to $L_n$.

\begin{example} \label{example:single_buyer}
Suppose there is only one buyer. We drop the subscript $n$ because $n\equiv 1$. The buyer's value for winning, $X$, is $H$ with probability $p$ and $L$ otherwise, where $0 < L < H$. Here, $\setA = \{ \emptyset , \{ 1 \}  \}$. If $pH < L$, then buyer $1$ always wins under the optimal allocation, irrespective of the value of $X$. This is also the efficient allocation. However, if $pH \geq L$, then $w(L) \leq 0$ and buyer $1$ wins only if he bids $H$. This is not efficient because the buyer is not a winner if $X = L$. The social welfare realized is $pH$ while the $\text{MSW} = pH+(1-p)L$. Therefore  $\text{ELR}(\bs{\pi}^o) = (1-p)L/(pH+(1-p)L) = (1-p)/(pr + 1 - p)$, where $H/L = r$. Since $pr \geq 1$, $\text{ELR}$ is maximized by setting $p = 1/r$. For this choice of $p$, we get $\text{ELR}(\bs{\pi}^o) = (r-1)/(2r-1)$. As $r\rightarrow \infty$ (so $p = 1/r \rightarrow 0$), $\text{ELR}(\bs{\pi}^o) \rightarrow 1/2$.
\end{example}

The following proposition shows that the worst case ELR for multiple binary valued single-parameter buyers is no worse than it is for the one buyer example given above.

\begin{proposition} \label{proposition:elr-mi-bin-main}
Given any $r > 1$, the worst case ELR for binary valued single-parameter buyers, denoted by $\eta(r,2;\setA)$, satisfies: 
$\eta(r,2;\setA) \leq (r-1)/(2r-1) \leq 1/2$.
\end{proposition}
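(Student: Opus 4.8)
The plan is to obtain a pointwise (per‑realization) upper bound on the welfare that the optimal allocation loses, push it through an expectation, and then verify the target inequality one buyer at a time. Throughout fix $(\mb{x}_{1:N},\mb{p}_{1:N})\in\mc{D}_{r,2}$, write $X_n\in\{L_n,H_n\}$ with $p_n=\Prob{X_n=H_n}$, and recall that for binary buyers $\overline{w}_n=w_n$, $w_n(H_n)=H_n$, $w_n(L_n)=(L_n-p_nH_n)/(1-p_n)$, so in particular $w_n(v_n)\le v_n$ for both values, and that membership in $\mc{D}_{r,2}$ forces $H_n/L_n\le (\max_m H_m)/(\min_m L_m)\le r$ for every $n$ (a buyer with $K_n=1$ has $w_n$ equal to the identity and is irrelevant to the loss, so I assume $K_n=2$). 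Pick a deterministic efficient allocation $A^e(\cdot)\in\argmax_{A\in\setA}\sum_{n\in A}v_n$ and, for each $\mb{v}$, any $A^o(\mb{v})\in\argmax_{A\in\setA}\sum_{n\in A}w_n(v_n)$, and denote by $W^o(\mb{v})=\sum_n Q_n^o(\mb{v})v_n$ the welfare realized by $\widetilde{\bs{\pi}}^o$ at $\mb{v}$; since $\widetilde{\bs{\pi}}^o$ breaks ties toward maximal realized welfare, $W^o(\mb{v})\ge\sum_{n\in A^o(\mb{v})}v_n$.

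The crux is the pointwise loss bound. Split the buyers as $N_-=\{n:w_n(L_n)<0\}=\{n:p_nH_n>L_n\}$ and $N_+=\{n:w_n(L_n)\ge0\}$, and for each $\mb{v}$ set $B(\mb{v})=A^e(\mb{v})\setminus\{n\in A^e(\mb{v}):v_n=L_n,\ n\in N_-\}$, which lies in $\setA$ by downward closedness. Optimality of $A^o$ for the virtual objective together with $w_n(v_n)\le v_n$ give $W^o(\mb{v})\ge\sum_{n\in A^o(\mb{v})}v_n\ge\sum_{n\in A^o(\mb{v})}w_n(v_n)\ge\sum_{n\in B(\mb{v})}w_n(v_n)$, so the realized loss $\text{MSW}(\mb{v})-W^o(\mb{v})$ is at most $\sum_{n\in A^e(\mb{v})}v_n-\sum_{n\in B(\mb{v})}w_n(v_n)$, which equals
\[
\sum_{n\in A^e(\mb{v}):\,v_n=L_n,\ n\in N_-}L_n\ +\ \sum_{n\in A^e(\mb{v}):\,v_n=L_n,\ n\in N_+}\bigl(L_n-w_n(L_n)\bigr),\qquad L_n-w_n(L_n)=\tfrac{p_n(H_n-L_n)}{1-p_n}.
\]
Deleting the $N_-$ buyers from $B$ is essential: for them $L_n-w_n(L_n)>L_n$, so it is cheaper to charge the loss only $L_n$; the naive estimate $\text{MSW}(\mb{v})-W^o(\mb{v})\le\sum_{n\in A^e(\mb{v})}(v_n-w_n(v_n))$ is too weak because $w_n(L_n)$ becomes arbitrarily negative as $p_n\uparrow1$.

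Now take expectations. Writing $a_n=\Prob{n\in A^e(L_n,\mb{X}_{-n})}$ and $b_n=\Prob{n\in A^e(H_n,\mb{X}_{-n})}$, the monotonicity of $\bs{\pi}^e$ from Proposition~\ref{proposition:opt-revenue} (that is, $q_n^e(L_n)\le q_n^e(H_n)$, which reads $a_n\le b_n$) and independence of $X_n$ from $\mb{X}_{-n}$ turn the pointwise bound into
\[
\text{MSW}-\E{\textstyle\sum_n Q_n^o(\mb{X})X_n}\ \le\ \sum_{n\in N_-}a_n(1-p_n)L_n+\sum_{n\in N_+}a_np_n(H_n-L_n),\qquad \text{MSW}=\sum_n\bigl(b_np_nH_n+a_n(1-p_n)L_n\bigr).
\]
I then prove the claim termwise: each summand on the left is at most $\tfrac{r-1}{2r-1}\bigl(b_np_nH_n+a_n(1-p_n)L_n\bigr)$. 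Replacing $b_n$ by $a_n$ (legitimate since $b_n\ge a_n$) and dividing by $a_n$ (the term is $0$ when $a_n=0$), for $n\in N_-$ this reduces to $(1-p_n)L_n\le\tfrac{r-1}{2r-1}\bigl(p_nH_n+(1-p_n)L_n\bigr)$, which follows from $p_n>L_n/H_n\ge1/r$; for $n\in N_+$ it reduces to $p_n(H_n-L_n)\le\tfrac{r-1}{2r-1}\bigl(p_nH_n+(1-p_n)L_n\bigr)$, which follows from $p_n\le L_n/H_n$ together with $H_n\le rL_n$. Summing over $n$ yields $\text{MSW}-\E{\sum_nQ_n^o(\mb{X})X_n}\le\tfrac{r-1}{2r-1}\text{MSW}$, i.e. $\text{ELR}(\widetilde{\bs{\pi}}^o)\le\tfrac{r-1}{2r-1}$; taking the supremum over $\mc{D}_{r,2}$ gives $\eta(r,2;\setA)\le\tfrac{r-1}{2r-1}$, and $\tfrac{r-1}{2r-1}\le\tfrac12$ for $r>1$ is immediate.

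The step I expect to be the real obstacle is the pointwise loss bound: one must combine the two competing estimates $\text{MSW}(\mb{v})-W^o(\mb{v})\le\sum_{n\in A^e(\mb{v})}(v_n-w_n(v_n))$ and $\text{MSW}(\mb{v})-W^o(\mb{v})\le\sum_{n\in A^e(\mb{v})}v_n$ at the level of individual buyers — this is exactly what the choice of $B(\mb{v})$ accomplishes, via downward closedness — and one must retain the full $b_np_nH_n$ mass on the right of the termwise inequality (equivalently, invoke monotonicity of the efficient allocation), since dropping it makes the $N_+$ inequality false as $p_n\downarrow0$.
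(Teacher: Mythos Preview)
Your proof is correct and follows essentially the same route as the paper's. Your auxiliary set $B(\mb{v})$ is exactly the paper's allocation $\widehat{\bs{\pi}}$ (the efficient allocation with the $\{n:X_n=L_n,\ w_n(L_n)\le 0\}$ buyers dropped), and your termwise split into $N_-$ and $N_+$ with the two elementary inequalities is identical to the paper's two cases; the only cosmetic difference is that the paper reaches the bound $W^o\ge\E{\sum_{n\in B}w_n(X_n)}$ via revenue (using $W^o\ge R(\bs{\pi}^o)\ge R(\widehat{\bs{\pi}})$ from individual rationality and optimality), whereas you obtain it pointwise from $\sum_{n\in A^o}v_n\ge\sum_{n\in A^o}w_n(v_n)\ge\sum_{n\in B}w_n(v_n)$.
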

\begin{proof}
The proof is given in Appendix \ref{sec:appendix1}.
\end{proof}

Notice that the upper bound in Proposition \ref{proposition:elr-mi-bin-main} holds for any tie-breaking rule, any values of $r$ and $N$, and any $\setA$. The bound $\eta(r,2;\setA) \leq 1/2$, which holds uniformly over all $r$, can be improved further by putting some constraints on the structure of $\setA$ and on $X_n$'s. The following result on auction of $S$ identical items, where buyers are binary valued and have same priors, describes one such case. 
\begin{proposition}
\label{proposition:elr-mi-sym-thm}
Let $X_n$'s be i.i.d. random variables taking values $H$ and $L$, where $H > L > 0$, with probabilities $p$ and $1-p$ respectively. Let $\setA = \{A: A \subseteq \setN, |A| \leq S \}$, where $S \leq N$. Then, the worst case ELR satisfies: $\eta(r,2;\setA) \leq S/(S + N)$, and in particular, $\eta(\infty,2;\setA) = S/(S + N)$.
\end{proposition}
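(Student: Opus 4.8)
\section*{Proof proposal}

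The plan is to reduce the computation of $\eta(r,2;\setA)$ for $\setA=\{A:|A|\le S\}$ to a one–dimensional optimization over the common success probability, and then to a concentration–type inequality for the binomial law. Write the common value as $H$ with probability $p$ and $L$ with probability $1-p$, set $r_0=H/L$, and let $N_H$ be the number of buyers whose realized value is $H$, so $N_H\sim\mathrm{Binomial}(N,p)$. For $\setA=\{A:|A|\le S\}$ an efficient allocation awards the $S$ items to the buyers with the $S$ highest values; since values are only $H$ or $L$, the pointwise identity $\min(N_H,S)+(S-N_H)^{+}\equiv S$ gives $\text{MSW}=H\,\E{\min(N_H,S)}+L\,\E{(S-N_H)^{+}}$. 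In the optimal allocation of Algorithm~\ref{alg:mwa} we have $w_n=\overline{w}_n$ (binary case), with $w(H)=H>0$ and $w(L)=(L-pH)/(1-p)$. If $pH<L$ then $w(L)>0$, the optimal allocation coincides with an efficient one, and $\text{ELR}=0$. If $pH\ge L$ then $w(L)\le 0$, only the (up to $S$) buyers with value $H$ win, the realized social welfare is $H\,\E{\min(N_H,S)}$, and hence
\[
\text{ELR}=\frac{L\,\E{(S-N_H)^{+}}}{H\,\E{\min(N_H,S)}+L\,\E{(S-N_H)^{+}}}.
\]

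Next I would optimize this over $\mc{D}_{r,2}$. The ratio is invariant under a common scaling of $H$ and $L$, and for fixed $p$ it is strictly decreasing in $r_0=H/L$ (the denominator grows, the numerator does not, and $\E{\min(N_H,S)}\ge\Prob{N_H\ge1}>0$). Within $\mc{D}_{r,2}$ the constraints $pH\ge L$ (else $\text{ELR}=0$) and $H/L\le r$ confine $r_0$ to $[1/p,r]$, which is nonempty only when $p\ge 1/r$; therefore the supremum over $\mc{D}_{r,2}$ is approached as $r_0\downarrow 1/p$, where one invokes the perturbation argument preceding Section~\ref{sec:elr-mi-bin-val} to restrict to optimal allocations serving only positive–MVV buyers. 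Setting $H/L=1/p$ yields
\[
\eta(r,2;\setA)=\sup_{p\in[1/r,\,1)}g(p),\qquad
g(p):=\frac{p\,\E{(S-N_H)^{+}}}{\E{\min(N_H,S)}+p\,\E{(S-N_H)^{+}}}.
\]

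Using $\E{\min(N_H,S)}=S-\E{(S-N_H)^{+}}$, an elementary rearrangement shows that $g(p)\le S/(S+N)$ is equivalent to
\[
pN\,\E{(S-N_H)^{+}}\;\le\;S\,\E{\min(N_H,S)},
\]
equivalently $\E{\min(N_H,S)}\ge pNS/(pN+S)$, equivalently $\E{(N_H-S)^{+}}\le (pN)^2/(pN+S)$, for $N_H\sim\mathrm{Binomial}(N,p)$ and $1\le S\le N$. This is the technical heart of the proposition, and I expect it to be the main obstacle: the inequality genuinely uses the concentration of $N_H$ and is not accessible from its mean alone — applying Jensen to the (true) pointwise bounds $\min(N_H,S)\ge N_HS/(N_H+S)$ or $(S-N_H)^{+}\le S^{2}/(N_H+S)$ points the wrong way, Markov– and variance–only (Scarf) bounds are too weak, and bounding $\text{ELR}$ item–by–item through a mediant inequality also loses too much. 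One workable route: use $k\binom{N}{k}=N\binom{N-1}{k-1}$ to write $\E{(N_H-S)^{+}}=pN\,\Prob{\tilde N\ge S}-S\,\Prob{N_H\ge S+1}$ with $\tilde N\sim\mathrm{Binomial}(N-1,p)$, combine with $\Prob{N_H\ge S+1}\ge p\,\Prob{\tilde N\ge S}$ to get $\E{(N_H-S)^{+}}\le p(N-S)\,\Prob{\tilde N\ge S}$, and then finish by treating separately the ranges $S\le\E{\tilde N}$ and $S>\E{\tilde N}$, using a Chernoff–type tail bound in the latter; induction on $N$ is an alternative. The case $S=1$ is clean and indicative: there the inequality reduces to $(1-p)^{N}(1+pN)\le 1$, immediate from $1+x\le e^{x}$ with $x=pN$.

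Finally, for the equality $\eta(\infty,2;\setA)=S/(S+N)$, I would let $r\to\infty$ so that $\sup_{p\in[1/r,1)}g(p)\to\sup_{p\in(0,1)}g(p)$, and observe that as $p\to 0^{+}$ one has $\E{(N_H-S)^{+}}=O(p^{S+1})$, hence $\E{\min(N_H,S)}=pN+O(p^{S+1})$ and $\E{(S-N_H)^{+}}=S-pN+O(p^{S+1})$, so $g(p)=\big(pS+O(p^{2})\big)/\big(p(S+N)+O(p^{2})\big)\to S/(S+N)$. Together with the upper bound $g(p)\le S/(S+N)$ established above, this gives $\sup_{p\in(0,1)}g(p)=S/(S+N)$, completing the proof.
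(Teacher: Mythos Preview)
Your reduction to the single inequality $\E{\min(N_H,S)} \ge pNS/(pN+S)$ (equivalently $\E{(N_H-S)^+}\le (pN)^2/(pN+S)$), together with the approach to the limit $p\to 0$ for the equality statement, is correct and is exactly what the paper does. The gap is in your proposed proof of this inequality. The identity $\E{(N_H-S)^+}=pN\,\Prob{\tilde N\ge S}-S\,\Prob{N_H\ge S+1}$ and the estimate $\Prob{N_H\ge S+1}\ge p\,\Prob{\tilde N\ge S}$ are both correct, but the resulting bound $\E{(N_H-S)^+}\le p(N-S)\,\Prob{\tilde N\ge S}$ is already too weak to conclude. For instance, take $N=100$, $S=40$, $p=1/2$: then $(pN)^2/(pN+S)=2500/90\approx 27.8$, whereas $p(N-S)\,\Prob{\tilde N\ge 40}\approx 30\cdot 0.97>29$ (here $\tilde N\sim\mathrm{Binomial}(99,1/2)$ has mean $49.5$ and standard deviation about $5$, so $\Prob{\tilde N\ge 40}$ is essentially~$1$). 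No subsequent Chernoff bound on $\Prob{\tilde N\ge S}$ can rescue this, since the intermediate bound has already overshot the target; and in the regime $S\le\E{\tilde N}$ you offer no separate argument. The ``induction on $N$'' alternative is not developed either.

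The paper handles this step by a different device. It first invokes a classical extremal result of Hoeffding: for a concave $f$ (here $f(j)=\min(j,S)$), the expectation $\E{f(\sum I_j)}$ over independent Bernoullis with a fixed total mean is minimized at the i.i.d.\ configuration; padding with degenerate Bernoullis and passing to the limit yields $\E{\min(N_H,S)}\ge \E{\min(Z,S)}$ with $Z\sim\mathrm{Poisson}(Np)$. It then proves the Poisson inequality $\E{\min(Z,S)}\ge \lambda S/(\lambda+S)$, $\lambda=Np$, by rewriting it as $S^2 e^{\lambda}\ge (\lambda+S)\sum_{j=0}^{S-1}(S-j)\lambda^j/j!$ and comparing the coefficients of $\lambda^j$ on both sides term by term. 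This two-step route (Binomial $\to$ Poisson via Hoeffding, then an explicit power-series comparison) is the missing ingredient in your proposal.
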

\begin{proof}
The proof is given in Appendix \ref{sec:appendix2}.
\end{proof}

We conjecture that the bound $S/(S + N)$ on the worst case ELR holds uniformly over all $r$ even for any collections of binary valued single-parameter i.i.d. buyers such that the cardinality of any possible set of winners is at most $S$ (but not necessarily containing all subsets of $\setN$ of cardinality at most $S$); i.e., if $A \in \setA$ then $|A| \leq S$. We have the following preliminary result\footnote{Proposition \ref{proposition:elr-mi-conj} restricts to probability distributions such that the reserve price is $H$. It is expected that loss in efficiency in the case of reserve price equal to $H$ will be higher than that with reserve price equal to $L$. Simulations on randomly generated $\setA$ are consistent with the conjecture.}:
\begin{proposition}
\label{proposition:elr-mi-conj}
Let $X_n$'s be i.i.d. random variables taking values $H$ and $L$, where $H > L > 0$, with probabilities $p$ and $1-p$ respectively. Let $\setA$ be such that if $A \in \setA$ then $|A| \leq S$, and $\bs{\pi}$ be any optimal allocation rule. Then,
\beqn
\lim_{p\rightarrow \infty}\left(\sup_{H,L:~pH \geq L}\text{ELR}(\bs{\pi}^o) = \frac{S}{S + N}\right).
\eeqn
\end{proposition}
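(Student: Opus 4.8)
The plan is to use the hypothesis $pH\ge L$, which forces the reserve price to equal $H$, so that in the optimal auction only buyers bidding $H$ can win, and then to expand both the realized welfare and the maximum social welfare to first order in $p$ as $p\to 0$ (we read the limit in the statement as $p\to 0$, the only reading consistent with the constraint $pH\ge L$).

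By scale invariance of the ELR I normalize $H=1$, so the constraint becomes $L\in(0,p]$. Since $w(L)=(L-pH)/(1-p)\le 0$ while $w(H)=H>0$, restricting (as justified in Section~\ref{sec:elr}) to optimal allocations that include only buyers with positive MVV, every optimal allocation selects a maximum-cardinality feasible subset of the set $T\subseteq\setN$ of buyers who bid $H$; its realized welfare is $H\,f(T)$, where $f(T):=\max\{|A|:A\in\setA,\ A\subseteq T\}$, and this value does not depend on the tie-breaking rule because all winners have the common value $H$. For the efficient benchmark, once $p<1/N$ we have $H=1>NL$, so a welfare-maximizing feasible set uses as many $H$-bidders as possible; on the outcomes that matter to first order this is explicit, namely $\text{MSW}(\emptyset)=L\,S_{\max}$ with $S_{\max}:=\max_{A\in\setA}|A|$, and $\text{MSW}(\{n\})=H+(t_n-1)L$ with $t_n:=\max\{|A|:A\in\setA,\ n\in A\}$ (here downward-closedness and the standing assumption that every buyer lies in some member of $\setA$ give $\{n\}\in\setA$, so $f(\{n\})=1$ and $t_n\ge 1$). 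I take $S=S_{\max}$; if $\max_{A\in\setA}|A|<S$ the same computation yields the limit $S_{\max}/(S_{\max}+N)$, and equality with $S/(S+N)$ is possible only when $S=S_{\max}$.

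Next I expand in $p$. With $T$ the random set of $H$-bidders, $\Prob{|T|=0}=(1-p)^N$, $\Prob{|T|=1}=Np(1-p)^{N-1}$, and $\Prob{|T|\ge 2}=O(p^2)$; since the realized welfare and $\text{MSW}$ are each at most $NH=N$ on every outcome, the $\{|T|\ge 2\}$ outcomes contribute $O(p^2)$ with constant depending only on $N$ and $S$ and \emph{uniform in} $L\in(0,p]$. A direct computation then gives $\E{H f(T)}=Np+O(p^2)$ (only the singleton events contribute at first order, each with value $1$) and $\E{\text{MSW}(T)}=LS+pN+O(p^2)$ (the $(1-p)^N$ prefactor and the $L$-terms from singleton events are absorbed into $O(p^2)$ because $L\le p$). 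Hence $\text{ELR}=\bigl(LS+O(p^2)\bigr)/\bigl(LS+pN+O(p^2)\bigr)$; dividing through by $p$ and putting $c:=L/p\in(0,1]$ gives $\text{ELR}=(cS+O(p))/(cS+N+O(p))$, the $O(p)$ terms bounded by $C(N,S)\,p$ uniformly in $c$.

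Finally I optimize over $H,L$ — equivalently over $c\in(0,1]$ — and then let $p\to 0$. Since $x\mapsto x/(x+N)$ is increasing, for $p$ small the uniform error bounds give $\text{ELR}\le (cS+Cp)/(cS+N-Cp)\le(S+Cp)/(S+N-Cp)$, so $\sup_{H,L:\,pH\ge L}\text{ELR}\le(S+Cp)/(S+N-Cp)\to S/(S+N)$; evaluating at $c=1$ (that is, $L=p$) gives $\sup_{H,L:\,pH\ge L}\text{ELR}\ge(S-Cp)/(S+N+Cp)\to S/(S+N)$. Squeezing between these bounds proves $\lim_{p\to 0}\sup_{H,L:\,pH\ge L}\text{ELR}(\bs{\pi}^o)=S/(S+N)$. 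The one point requiring care is making the $O(p^2)$ remainders uniform in $L$ over the full admissible range $(0,p]$ — this is exactly what allows interchanging $\sup_{H,L}$ with $\lim_{p\to 0}$ — together with the small-$p$ reduction $H>NL$ that pins down the structure of the welfare-optimal feasible set.
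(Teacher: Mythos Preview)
Your proof is correct and takes a genuinely different route from the paper's. The paper first argues that, after normalizing $L=1$, the supremum over $H$ with $pH\ge L$ is achieved at $H=1/p$ (so the constraint binds), and then sandwiches the general $\setA$ between two explicit collections, $\setA_1$ (one fixed maximum-cardinality set together with the remaining singletons) and $\setA_2$ (all subsets of size at most $S$); it computes the ELR for each in closed form and checks that both bounds converge to $S/(S+N)$ as $p\to 0$. You bypass both steps: you expand the numerator and denominator of the ELR directly to first order in $p$ for the original $\setA$, parameterize the free ratio $c=L/(pH)\in(0,1]$, and use the uniform-in-$c$ control of the $O(p^2)$ remainders to take the supremum and the limit in either order. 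Your argument is shorter and more elementary, since it avoids the auxiliary collections entirely; the paper's sandwich, on the other hand, delivers explicit non-asymptotic inequalities for $\text{ELR}(1/p,p)$ and isolates the reduction $\sup_{H,L}\text{ELR}=\text{ELR}(1/p,p)$ as a separate fact, which may be of independent use. One small remark: you are right to flag that the stated result actually yields $S_{\max}/(S_{\max}+N)$ with $S_{\max}=\max_{A\in\setA}|A|$; the paper's proof tacitly makes the same identification $S=S_{\max}$.
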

\begin{proof}
The proof is given in Appendix \ref{sec:appendix3}.
\end{proof}

\subsection{Single item auctions with i.i.d. buyers} \label{sec:elr-single-item}
We now consider single item auctions where $X_n$'s are i.i.d. Each $X_n$ takes $K$ discrete values $\{x^1,x^2,\ldots,x^K\}$, where $0 < x^1 < x^2 < \ldots < x^K$ and $\mathbb{P}\big[X_n = x^i\big] = p^i > 0$ for all $1 \leq i \leq K$. Here,~$\setA$ consists of singletons (and empty set $\emptyset$) only. An efficient allocation simply awards the item to a buyer with the highest valuation for it. The corresponding MSW is $\E{\max_{n \in \setN} X_n}$. Define $z^i$ as:
\beq{max-prob}
z^i \triangleq \Prob{\max_{n \in \setN} X_n = x^i} = \bigg(\sum_{j=1}^i p^j\bigg)^N - \bigg(\sum_{j=1}^{i-1}p^j\bigg)^N,
\eeq
with the notational convention of $\sum_{j=1}^{j=0}(.) = 0$.  With this, the MSW is equal to $\sum_{i=1}^K z^ix^i$.

An optimal allocation awards the item to a buyer with the highest positive MVV. The MVVs are nondecreasing in the true values but need not be strictly increasing. Tie is broken is the favor of a buyer with the highest value for the item. This maximizes the social welfare realized within the set of optimal allocations. Since $X_n$'s are i.i.d., the reserve prices are same for everyone. Hence, the optimal allocation rule $\widetilde{\bs{\pi}}^o$ sets a common reserve price for everyone and awards the item to the buyer with the highest valuation for it. The loss in efficiency is only because of not selling the item if the maximum bid of all the buyers is below the common reserve price.

Let $\mb{p} \triangleq (p^1,p^2,\ldots, p^K)$ and $\mb{x} \triangleq (x^1,x^2,\ldots,x^K)$. Let the reserve price be $x^{\thld}$, where $\thld$ is the index corresponding to the reserve price. From \eqref{eq:reserve-price},
\beq{reserve-price-idx}
\thld = \bigg\{\text{max} ~ i: ~ i \in \argmax_{k: 1\leq k \leq K} ~ x^k(\sum_{j = k}^K p^j)\bigg\}.
\eeq
The social welfare realized by the optimal allocation is $\sum_{i=\thld}^K z^ix^i$. Hence, the ELR for single item auctions as a function of $\mb{x}$ and $\mb{p}$ is given by:
\beq{elr-si}
\elr{N} = \frac{\sum_{i=1}^{\thld-1} z^ix^i}{\sum_{i=1}^K z^ix^i},
\eeq
where we use $\elr{N}$ to denote the ELR function defined by \eqref{eq:elr}. This is because $\mb{x}_n$ and~$\mb{p}_n$ are same for all $n \in \setN$, $\setA$ contains only singletons, and $\widetilde{\bs{\pi}}^o$ is kept fixed in the subsequent discussion.

The worst-case ELR is given by the following optimization problem:
\beq{elr-si-opt-obj}
\displaystyle \maximize_{\mb{x},\mb{p}} \quad \elr{N}, \\
\eeq
\beq{elr-si-opt-cons}
\begin{array}{c}
\text{subject to:} \quad p^i > 0 ~\text{for}~ 1\leq i\leq K, ~~ \sum_{i=1}^K p^i = 1, \\
\quad\quad\quad\quad \quad 0 < x^1 < x^2 < \ldots < x^K, ~~ x^K \leq r x^1.
\end{array}
\eeq
The optimum value of the above problem is denoted by $\eta(r,K,N)$. The following proposition shows that the optimization problem defined by \eqref{eq:elr-si-opt-obj} and \eqref{eq:elr-si-opt-cons} can be reduced to a relatively simple optimization problem involving only the common probability vector of the buyers.
\begin{proposition} \label{proposition:elr-si-main-thm}
Let $\gamma^{*}(r,K,N)$ be the value of the optimization problem given below.
\begin{equation*}
\maximize_{\mb{p}} \sum_{i=1}^{K-1}\left(\frac{z^i}{\sum_{j=i}^K p^j}\right),
\end{equation*}
\beq{elr_si_constraint_new}
\text{subject to:} ~ p^K = \frac{1}{r}, ~ \sum_{i=1}^K p^i = 1, ~ p^i > 0, ~ 1\leq i\leq K. 
\eeq
Then the worst case ELR for single item auctions is: 
\beqn
\eta(r,K,N) = \frac{\gamma^{*}(r,K,N)}{\frac{r^N - (r-1)^N}{r^{N-1}}+\gamma^{*}(r,K,N)}.
\eeqn
\end{proposition}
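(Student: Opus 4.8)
The plan is to prove $\elr{N}\le\gamma^{*}/(\gamma^{*}+D)$ for every feasible $(\mb{x},\mb{p})$, where $D\triangleq(r^{N}-(r-1)^{N})/r^{N-1}$, and then exhibit configurations approaching it. Write $F_{i}\triangleq\sum_{j\le i}p^{j}$, so $F_{0}=0$, $S^{i}=\sum_{j\ge i}p^{j}=1-F_{i-1}$, $z^{i}=F_{i}^{N}-F_{i-1}^{N}$; let $t=\thld$ and $E(s)\triangleq(1-s^{N})/(1-s)=1+s+\dots+s^{N-1}$ (so $E'\ge0$). Since $z^{i}$ and $t$ depend on $\mb{x}$ only through its ordering, $\elr{N}$ is scale invariant, so normalize $x^{1}=1$. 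Writing $\elr{N}=A/(A+B)$ with $A=\sum_{i<t}z^{i}x^{i}$, $B=\sum_{i\ge t}z^{i}x^{i}$, I would use $x^{i}S^{i}\le x^{t}S^{t}$ (because $t\in\argmax_{k}x^{k}S^{k}$) to get $A\le x^{t}S^{t}\sum_{i<t}z^{i}/S^{i}$, and $x^{i}\ge x^{t}$ for $i\ge t$ with $\sum_{i\ge t}z^{i}=1-F_{t-1}^{N}$ to get $B\ge x^{t}(1-F_{t-1}^{N})$; since $s\mapsto s/(s+c)$ is increasing,
\[
\elr{N}\;\le\;\frac{\sum_{i<t}z^{i}/S^{i}}{\sum_{i<t}z^{i}/S^{i}+E(F_{t-1})}.
\]
The same two inequalities with the ratio constraint give $x^{1}\le x^{t}S^{t}\le x^{K}(1-F_{t-1})\le r(1-F_{t-1})$, hence $F_{t-1}\le1-1/r$.

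\textbf{Step 2 (reduction to $\gamma^{*}$).} Set $\Psi_{K}(s)\triangleq\sup\big\{\sum_{i=1}^{K-1}(F_{i}^{N}-F_{i-1}^{N})/(1-F_{i-1}):0=F_{0}<\dots<F_{K-1}=s\big\}$. Inserting a point into a chain strictly increases the sum, so $\Psi_{j}(s)$ is nondecreasing in $j$; hence $\sum_{i<t}z^{i}/S^{i}\le\Psi_{t}(F_{t-1})\le\Psi_{K}(F_{t-1})$. Rewriting the $\gamma^{*}$-program in the variables $F_{i}$ (the constraint $p^{K}=1/r$ becomes $F_{K-1}=1-1/r$) gives $\Psi_{K}(1-1/r)=\gamma^{*}(r,K,N)$, while a short computation gives $E(1-1/r)=D$. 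Thus the displayed bound is at most $\Psi_{K}(F_{t-1})/(\Psi_{K}(F_{t-1})+E(F_{t-1}))$, and (using $F_{t-1}\le1-1/r$) the whole upper bound $\elr{N}\le\gamma^{*}/(\gamma^{*}+D)$ follows once we prove the \emph{key lemma}: $\phi_{K}(s)\triangleq\Psi_{K}(s)/E(s)$ is nondecreasing on $(0,1)$.

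\textbf{Step 3 (the key lemma).} For $K=2$ we have $\Psi_{2}(s)=s^{N}$ and $\phi_{2}'(s)=s^{N-1}\big(N-(N+1)s+s^{N+1}\big)/(1-s^{N})^{2}\ge0$, since $p(s)\triangleq N-(N+1)s+s^{N+1}$ is decreasing on $[0,1]$ with $p(1)=0$. For $K\ge3$ I would use the recursion $\Psi_{K}(s)=\sup_{0<u<s}\big[\Psi_{K-1}(u)+(s^{N}-u^{N})/(1-u)\big]$ together with the auxiliary bound $\Psi_{j}(u)\le u^{N}/(1-u)$, which is the one-line induction $\sum_{i=1}^{m}(F_{i}^{N}-F_{i-1}^{N})/(1-F_{i-1})\le F_{m}^{N}/(1-F_{m})$. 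Given $s_{1}<s_{2}$, pick $u$ nearly optimal for $\Psi_{K}(s_{1})$ (so $u<s_{1}$); it then suffices that $s\mapsto(b+as^{N})/E(s)$ be nondecreasing on $[u,1)$ with $a=1/(1-u)>0$ and $b=\Psi_{K-1}(u)-u^{N}/(1-u)\le0$. Since $b+au^{N}=\Psi_{K-1}(u)\ge0$ the numerator stays nonnegative, and the derivative has the sign of $aNs^{N-1}E(s)-(b+as^{N})E'(s)$, which (using $b\le0$, $E'\ge0$) is $\ge a\,s^{N-1}p(s)/(1-s)^{2}\ge0$ by the same polynomial inequality; hence $\phi_{K}(s_{2})\ge\phi_{K}(s_{1})$.

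\textbf{Step 4 (attainment) and the main obstacle.} For near-equality, take a chain $0=F_{0}<\dots<F_{K-1}=1-1/r$ whose objective is within $\varepsilon$ of $\gamma^{*}=\Psi_{K}(1-1/r)$, set $p^{i}=F_{i}-F_{i-1}$ (so $p^{K}=1/r$) and $x^{i}=1/S^{i}=1/(1-F_{i-1})$; then $x^{1}=1$, $x^{K}=r$, every $x^{i}S^{i}=1$ so the reserve index is $K$, and a direct computation gives $\elr{N}\ge\gamma^{*}/(\gamma^{*}+D)-O(\varepsilon)$, so letting $\varepsilon\to0$ yields $\eta(r,K,N)=\gamma^{*}/(\gamma^{*}+D)$. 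I expect the main obstacle to be the monotonicity lemma for $\phi_{K}$; the device that keeps it elementary is the auxiliary bound $\Psi_{j}(u)\le u^{N}/(1-u)$, which forces $b\le0$ and makes the inductive step collapse onto the very same inequality $N-(N+1)s+s^{N+1}\ge0$ used in the base case.
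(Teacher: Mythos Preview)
Your proof is correct and takes a genuinely different route from the paper's. The paper proceeds through four separate reduction lemmas: (i) if the reserve index $t<K$, truncate $(\mb{x},\mb{p})$ to length $t$, strictly increasing the ELR; (ii) with $t=K$ imposed, push each $x^{i}$ to the maximal value $rp^{K}/S^{i}$ permitted by the threshold constraint, reducing to an optimization over $\mb{p}$ alone with $rp^{K}\ge1$; (iii) verify that the resulting value is nondecreasing in $K$, so that the truncation in (i) causes no loss; and (iv) show the constraint $rp^{K}\ge1$ can be tightened to equality via the explicit rescaling $\widehat{p}^{\,i}=(1+\epsilon)p^{i}$ for $i<K$, $\widehat{p}^{\,K}=p^{K}-\epsilon(1-p^{K})$, checking by algebraic rearrangement that the objective increases in $\epsilon$. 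Your argument instead bounds the ELR directly for arbitrary $(\mb{x},\mb{p})$ without first forcing $t=K$, packages the paper's Lemmas~1 and~3 into the one-line chain-insertion observation $\Psi_{j}(s)\le\Psi_{j+1}(s)$, and replaces the scaling construction of Lemma~4 by the analytic statement that $\phi_{K}(s)=\Psi_{K}(s)/E(s)$ is nondecreasing in $s$. What your approach buys is unification: a single derivative computation, reducing to the inequality $N-(N+1)s+s^{N+1}\ge0$ on $[0,1]$, handles the entire tightening, and the crude auxiliary bound $\Psi_{j}(u)\le u^{N}/(1-u)$ forcing $b\le0$ is a clean device with no counterpart in the paper. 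What the paper's approach buys is that each step is a concrete construction requiring no calculus; in particular its Lemma~4 is pure algebra. The attainment arguments are essentially the same in both.
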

\begin{proof}
The proof is given in Appendix~\ref{sec:appendix4}.
\end{proof}

\begin{corollary}
\label{corollary:elr-si-corr-bin}
For single item auctions with binary valued i.i.d. buyers, the worst case ELR, denoted by $\eta(r,2,N)$, is given by:
\beq{elr-si-binary}
\eta(r,2,N) = \frac{1}{\sum_{i=0}^N\left(\frac{r}{r-1}\right)^i} < \frac{1}{N+1}.
\eeq
Moreover, equality is achieved by letting $r\rightarrow\infty$.
\end{corollary}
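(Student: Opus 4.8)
The plan is to invoke Proposition~\ref{proposition:elr-si-main-thm} with $K=2$ and carry out the (now trivial) optimization explicitly. When $K=2$ the feasibility constraints in \eqref{eq:elr_si_constraint_new} read $p^2 = 1/r$ and $p^1+p^2=1$, which pin down $\mb{p}=((r-1)/r,\,1/r)$ uniquely, so there is nothing to maximize: $\gamma^{*}(r,2,N)$ is simply the value of the objective at this point. The objective $\sum_{i=1}^{K-1} z^i/(\sum_{j=i}^{K}p^j)$ collapses to the single term $z^1/(p^1+p^2)=z^1$, and from \eqref{eq:max-prob} we have $z^1=(p^1)^N=((r-1)/r)^N$. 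Hence $\gamma^{*}(r,2,N)=(r-1)^N/r^N$.

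Next I would substitute this into the closed form of Proposition~\ref{proposition:elr-si-main-thm}. Multiplying numerator and denominator by $r^N$ turns the denominator into $r\bigl(r^N-(r-1)^N\bigr)+(r-1)^N = r^{N+1}-(r-1)(r-1)^N = r^{N+1}-(r-1)^{N+1}$ and the numerator into $(r-1)^N$, giving
$$\eta(r,2,N)=\frac{(r-1)^N}{r^{N+1}-(r-1)^{N+1}}.$$

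It then remains to check this matches the geometric-series form in the statement. Summing the finite geometric progression gives $\sum_{i=0}^{N}\bigl(r/(r-1)\bigr)^i = (r-1)\bigl((r/(r-1))^{N+1}-1\bigr) = \bigl(r^{N+1}-(r-1)^{N+1}\bigr)/(r-1)^N$, whose reciprocal is exactly the expression above. For the strict bound, since $r/(r-1)>1$ for every $r>1$, each summand $\bigl(r/(r-1)\bigr)^i$ is at least $1$ and, because $N\ge 1$, the summand with $i=1$ is strictly larger than $1$; hence $\sum_{i=0}^{N}\bigl(r/(r-1)\bigr)^i > N+1$ and therefore $\eta(r,2,N) < 1/(N+1)$. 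Letting $r\to\infty$ sends $r/(r-1)\to 1$, so the sum tends to $N+1$ and $\eta(r,2,N)\to 1/(N+1)$, showing the bound is attained in the limit.

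There is no real obstacle here: once Proposition~\ref{proposition:elr-si-main-thm} is in hand the entire argument is a short algebraic reduction. The only point requiring a little care is reconciling the two equivalent forms of the answer --- the explicit rational function of $r$ obtained from the formula, and the reciprocal-of-geometric-sum form displayed in the corollary --- which is handled by the geometric series identity above.
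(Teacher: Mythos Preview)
Your proposal is correct and follows essentially the same route as the paper: invoke Proposition~\ref{proposition:elr-si-main-thm} with $K=2$, read off $\gamma^{*}(r,2,N)=(1-1/r)^N$ from the forced choice $p^K=1/r$, substitute and simplify to $(r-1)^N/(r^{N+1}-(r-1)^{N+1})$, rewrite via the geometric-series identity, and use $r/(r-1)>1$ for the strict bound and the $r\to\infty$ limit. Your write-up is slightly more explicit about the geometric-series step and the reason for strictness, but the argument is the same.
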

\begin{proof}
From Proposition \ref{proposition:elr-si-main-thm} and \eqref{eq:max-prob}, $\gamma^{*}(r,2,N) = (1-1/r)^N$. Thus,
\begin{align*}
\eta(r,2,N) & = \frac{\gamma^{*}(r,2,N)}{\frac{r^N - (r-1)^N}{r^{N-1}}+\gamma^{*}(r,2,N)} = \frac{(r-1)^N}{r\left(r^N-(r-1)^N\right)+(r-1)^N}, \\
& = \frac{(r-1)^N}{r^{N+1}-(r-1)^{N+1}} = \frac{1}{\sum_{i=0}^N\left(\frac{r}{r-1}\right)^i}.
\end{align*}
Since $r/(r-1) > 1$, $\eta(r,2,N) < 1/(N+1)$. The second part of the corollary is easy to verify.
\end{proof}

As a consequence of Proposition \ref{proposition:elr-si-main-thm}, we obtain a closed form expression for the worst case ELR for single buyer case, and lower and upper bounds on the worst case ELR for multiple buyers.

\begin{proposition}
\label{proposition:elr-si-thm-one-buyer}
For the case of only one buyer, the solution to the optimization problem defined in Proposition \ref{proposition:elr-si-main-thm} is given by: 
\beqn
\gamma^{*}(r,K,1) = (K-1)\left(1-r^{\frac{-1}{K-1}}\right).
\eeqn
Consequently, the worst case ELR, denoted by $\eta(r,K,1)$, is $\eta(r,K,1) = \gamma^{*}(r,K,1)/(1+\gamma^{*}(r,K,1))$.
\end{proposition}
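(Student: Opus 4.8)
The plan is to specialize the optimization problem of Proposition~\ref{proposition:elr-si-main-thm} to $N=1$ and solve it by a change of variables followed by the arithmetic--geometric mean inequality. When $N=1$, formula~\eqref{eq:max-prob} gives $z^i = p^i$, so the objective to be maximized is $\sum_{i=1}^{K-1} p^i/\big(\sum_{j=i}^K p^j\big)$ subject to $p^K = 1/r$, $\sum_{i=1}^K p^i = 1$, and $p^i > 0$.

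First I would introduce the tail sums $P_i \triangleq \sum_{j=i}^K p^j$ for $1 \le i \le K$. The constraints translate into $P_1 = 1$, $P_K = 1/r$, and $P_1 > P_2 > \cdots > P_K > 0$ (the strict inequalities coming from $p^i = P_i - P_{i+1} > 0$ for $i < K$), and conversely any strictly decreasing sequence with these endpoints corresponds to a feasible $\mb{p}$. Since $p^i/P_i = (P_i - P_{i+1})/P_i = 1 - P_{i+1}/P_i$, the objective equals $(K-1) - \sum_{i=1}^{K-1} P_{i+1}/P_i$. Thus maximizing the objective is equivalent to minimizing $\sum_{i=1}^{K-1} P_{i+1}/P_i$ over strictly decreasing sequences running from $P_1 = 1$ to $P_K = 1/r$.

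Next, by AM--GM, $\sum_{i=1}^{K-1} P_{i+1}/P_i \ge (K-1)\big(\prod_{i=1}^{K-1} P_{i+1}/P_i\big)^{1/(K-1)} = (K-1)(P_K/P_1)^{1/(K-1)} = (K-1)\,r^{-1/(K-1)}$, where the product telescopes and hence is fixed by the endpoint constraints. Equality holds precisely when all the ratios $P_{i+1}/P_i$ are equal, i.e. when $P_i = r^{-(i-1)/(K-1)}$; this sequence is strictly decreasing with the correct endpoints, and the associated weights $p^i = r^{-(i-1)/(K-1)}\big(1 - r^{-1/(K-1)}\big) > 0$ for $i < K$, $p^K = 1/r$, so the configuration is feasible and attains the bound. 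Therefore $\gamma^{*}(r,K,1) = (K-1) - (K-1)r^{-1/(K-1)} = (K-1)\big(1 - r^{-1/(K-1)}\big)$. The stated expression for $\eta(r,K,1)$ then follows immediately by substituting $N=1$ into Proposition~\ref{proposition:elr-si-main-thm}, noting that $(r^N - (r-1)^N)/r^{N-1} = 1$ when $N=1$.

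There is no substantial obstacle here; the only points requiring a little care are verifying that the substitution $\mb{p} \leftrightarrow (P_1,\dots,P_K)$ is a genuine bijection between feasible probability vectors and strictly decreasing sequences with the prescribed endpoints, and checking that the AM--GM equality configuration really does satisfy the strict positivity constraints $p^i > 0$, so that the supremum in the optimization problem is actually attained rather than merely approached in the limit.
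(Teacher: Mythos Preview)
Your argument is correct and takes a genuinely different route from the paper's proof. The paper forms the Lagrangian of the objective $\sum_{i=1}^{K-1} p^i/(\sum_{j\ge i}p^j)$, sets the partial derivatives to zero, and then shows by induction that the stationary point has $p^i=\lambda^{i-1}(1-\lambda)$ with $\lambda=r^{-1/(K-1)}$; the optimal value is then computed directly. Your approach instead passes to the tail sums $P_i$, rewrites the objective as $(K-1)-\sum_{i=1}^{K-1}P_{i+1}/P_i$, and exploits the fact that the product $\prod P_{i+1}/P_i$ telescopes to the fixed value $1/r$, so AM--GM immediately gives a tight lower bound on the sum of ratios. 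The two methods of course produce the same optimizer (your $P_i=r^{-(i-1)/(K-1)}$ gives exactly $p^i=\lambda^{i-1}(1-\lambda)$), but your route is more elementary and has the bonus of certifying global optimality directly from the inequality, whereas the Lagrangian computation in the paper locates a critical point without separately checking that it is a global maximum.
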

\begin{proof}
The proof is given in Appendix~\ref{sec:appendix5}.
\end{proof}

\begin{corollary}
\label{corollary:elr-si-corr-one-buyer}
For a fixed $K$, $\eta(r,K,1) < 1- 1/K$, uniformly over all $r > 1$, and $\lim_{r \rightarrow \infty}\eta(r,K,1) = 1- 1/K$. For a fixed $r$, $\eta(r,K,1) \leq \ln(r)/(1+\ln(r))$, uniformly over all positive integers $K$, and $\lim_{K \rightarrow \infty}\eta(r,K,1) = \ln(r)/(1+\ln(r))$.
\end{corollary}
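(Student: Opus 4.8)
The plan is to reduce everything to the explicit formula $\gamma^{*}(r,K,1) = (K-1)\bigl(1-r^{-1/(K-1)}\bigr)$ supplied by Proposition~\ref{proposition:elr-si-thm-one-buyer}, using the observation that $\phi(\gamma):=\gamma/(1+\gamma)$ is strictly increasing on $[0,\infty)$ and continuous, with $\phi(\gamma)\to 1$ as $\gamma\to\infty$. Since $\eta(r,K,1)=\phi\bigl(\gamma^{*}(r,K,1)\bigr)$, this reparametrisation transfers strict inequalities, non-strict inequalities, monotonicity, and limits for $\gamma^{*}$ directly to the corresponding statements for $\eta$, so it suffices to prove the four claims at the level of $\gamma^{*}$.

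For the first part, fix $K$ (the case $K=1$ is trivial, since then $\gamma^{*}\equiv 0$ and $1-1/K=0$). For $r>1$ we have $0<r^{-1/(K-1)}<1$, hence $0<\gamma^{*}(r,K,1)<K-1$; applying $\phi$ gives $\eta(r,K,1)<\phi(K-1)=(K-1)/K=1-1/K$, uniformly in $r$. Letting $r\to\infty$ makes $r^{-1/(K-1)}\to 0$, so $\gamma^{*}(r,K,1)\to K-1$ and, by continuity of $\phi$, $\eta(r,K,1)\to 1-1/K$.

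For the second part, fix $r$ and put $a=\ln r>0$. For $K\ge 2$ write $t=1/(K-1)\in(0,1]$, so that $\gamma^{*}(r,K,1)=g(t)$ with $g(t):=\bigl(1-e^{-at}\bigr)/t$. The uniform bound $\gamma^{*}(r,K,1)\le \ln r$ is precisely $1-e^{-at}\le at$, i.e.\ the elementary inequality $e^{-x}\ge 1-x$ with $x=at$; applying $\phi$ yields $\eta(r,K,1)\le \ln r/(1+\ln r)$ for every positive integer $K$ (the bound is clear for $K=1$ too, since then $\gamma^{*}=0$). For the limit I would show $g$ is nonincreasing on $(0,\infty)$: differentiating gives $g'(t)=t^{-2}\bigl(e^{-at}(1+at)-1\bigr)$, and $e^{-at}(1+at)\le 1$ because $e^{at}\ge 1+at$. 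Thus $\gamma^{*}(r,K,1)=g\bigl(1/(K-1)\bigr)$ is nondecreasing in $K$ (and $\gamma^{*}(r,1,1)=0\le 1-1/r=g(1)$, so the monotonicity extends down to $K=1$), while $g(t)\to a$ as $t\to 0^{+}$ by the first-order Taylor expansion of $e^{-at}$. Hence $\gamma^{*}(r,K,1)\uparrow \ln r$, and $\eta(r,K,1)\to \ln r/(1+\ln r)$.

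There is no genuine obstacle: the argument is entirely elementary once Proposition~\ref{proposition:elr-si-thm-one-buyer} is in hand. The only two analytic inputs are the convexity inequalities $e^{-x}\ge 1-x$ and $e^{x}\ge 1+x$, which deliver the uniform bound and the monotonicity of $g$ respectively; everything else is bookkeeping --- handling the degenerate case $K=1$ and checking that the increasing, continuous reparametrisation $\phi$ preserves all the relevant qualitative features.
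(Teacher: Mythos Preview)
Your proof is correct and follows essentially the same route as the paper: both arguments reduce to the closed form $\gamma^{*}(r,K,1)=(K-1)\bigl(1-r^{-1/(K-1)}\bigr)$ and the monotone map $\gamma\mapsto\gamma/(1+\gamma)$, use the trivial bound $r^{-1/(K-1)}>0$ (equivalently, monotonicity in $r$) for the first part, and the inequality $1-e^{-x}\le x$ with $x=\tfrac{\ln r}{K-1}$ together with $\lim_{a\to 0}(1-r^{-a})/a=\ln r$ for the second. Your additional proof that $g(t)=(1-e^{-at})/t$ is nonincreasing is a nice touch but not needed for the stated limit; and note that the strict inequality in the first part tacitly assumes $K\ge 2$, since for $K=1$ both sides are zero.
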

\begin{proof}
The first part follows easily by observing that $\eta(r,K,1)$ is an increasing function of $r$ and by letting $r \rightarrow \infty$. For the second part, notice that for any $a \geq 0$,
\beqn
\left(\frac{1-r^{-a}}{a}\right) = \left(\frac{1-e^{-a\ln(r)}}{a}\right) \leq \left(\frac{a\ln(r)}{a}\right) = \ln(r). 
\eeqn
Also, notice that, $\lim_{a \rightarrow 0} (1-r^{-a})/a = \ln(r)$. Taking $a = 1/(K-1)$ gives the result.
\end{proof}

\begin{proposition}
\label{proposition:elr-si-thm-many-buyers}
Define $\gamma^{*}_1(r,K,N)$ and $\gamma^{*}_2(r,K,N)$ as following:
\beqn
\gamma^{*}_1(r,K,N) \triangleq N\left[\sum_{i=N}^{\infty}\frac{1}{i}\left(1-\frac{1}{r}\right)^i\left(1-\frac{1}{K-1}\right)^i\right],
\eeqn
\beqn
\gamma^{*}_2(r,K,N) \triangleq N\left[\sum_{i=N}^{\infty}\frac{1}{i}\left(1-\frac{1}{r}\right)^i\right].
\eeqn
Then, for auctions with $N$ i.i.d. buyers and $K > 2$, the worst case ELR, denoted by $\eta(r,K,N)$, satisfies:
\beqn
\frac{\gamma^{*}_1(r,K,N)}{\frac{r^N - (r-1)^N}{r^{N-1}}+\gamma^{*}_1(r,K,N)} \leq \eta(r,K,N) \leq 
\frac{\gamma^{*}_2(r,K,N)}{\frac{r^N - (r-1)^N}{r^{N-1}}+\gamma^{*}_2(r,K,N)}.
\eeqn
\end{proposition}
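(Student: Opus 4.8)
The plan is to use Proposition~\ref{proposition:elr-si-main-thm} to reduce the assertion to the single inequality chain $\gamma^{*}_1(r,K,N)\le\gamma^{*}(r,K,N)\le\gamma^{*}_2(r,K,N)$ for the value of the probability-vector optimization, and then to observe that the map $t\mapsto t/(c+t)$ with $c\triangleq\frac{r^N-(r-1)^N}{r^{N-1}}>0$ is strictly increasing; applying it to the chain turns it into exactly the claimed two-sided bound on $\eta(r,K,N)$.

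First I would change variables from $\mb{p}$ to the cumulative weights $a_i\triangleq\sum_{j=1}^{i}p^j$ (so $a_0=0$). By \eqref{eq:max-prob}, $z^i=a_i^N-a_{i-1}^N$, while $\sum_{j=i}^{K}p^j=1-a_{i-1}$, and the constraint $p^K=1/r$ becomes $a_{K-1}=1-1/r$; hence
\[
\gamma^{*}(r,K,N)=\max\Bigl\{\ \sum_{i=1}^{K-1}\frac{a_i^N-a_{i-1}^N}{1-a_{i-1}}\ :\ 0=a_0<a_1<\cdots<a_{K-1}=1-\tfrac1r\ \Bigr\}.
\]
The one computation I would record is the elementary identity $\int_0^{x}\frac{N t^{N-1}}{1-t}\,dt=N\sum_{i=N}^{\infty}\frac{x^i}{i}$ for $x\in[0,1)$, obtained by expanding $1/(1-t)$ as a geometric series and integrating termwise. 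Writing $f(t)\triangleq\frac{N t^{N-1}}{1-t}$ --- nonnegative and nondecreasing on $[0,1)$ --- the identity gives $\gamma^{*}_2(r,K,N)=\int_0^{1-1/r}f$ and $\gamma^{*}_1(r,K,N)=\int_0^{(1-1/r)(1-1/(K-1))}f$.

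For the upper bound, for any feasible sequence I would write $a_i^N-a_{i-1}^N=\int_{a_{i-1}}^{a_i}N t^{N-1}\,dt$ and use $1-t\le1-a_{i-1}$ on $[a_{i-1},a_i]$ to get $\frac{a_i^N-a_{i-1}^N}{1-a_{i-1}}\le\int_{a_{i-1}}^{a_i}f$; summing over $i$ telescopes the range of integration to $[0,1-1/r]$, so $\gamma^{*}(r,K,N)\le\gamma^{*}_2(r,K,N)$. For the lower bound I would test the uniformly spaced feasible point $a_i=\frac{i}{K-1}\bigl(1-\tfrac1r\bigr)$, i.e.\ $p^i=\frac{1}{K-1}\bigl(1-\tfrac1r\bigr)$ for $i<K$ and $p^K=\tfrac1r$, which is a valid probability vector with step $\delta\triangleq\frac{1-1/r}{K-1}$. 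Convexity of $t\mapsto t^N$ gives $a_i^N-a_{i-1}^N\ge N a_{i-1}^{N-1}\delta$, so the $i$-th term of the objective is at least $f(a_{i-1})\,\delta$; and since $f$ is nondecreasing, $f(a_{i-1})\,\delta\ge\int_{a_{i-2}}^{a_{i-1}}f$ for $i\ge2$, while the $i=1$ term contributes nonnegatively. Summing these lower bounds telescopes to $\int_{a_0}^{a_{K-2}}f=\gamma^{*}_1(r,K,N)$ (recall $a_{K-2}=(1-1/r)(1-1/(K-1))$), giving $\gamma^{*}(r,K,N)\ge\gamma^{*}_1(r,K,N)$. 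Together with the first paragraph this proves the proposition; note that as $K\to\infty$ the factor $1-1/(K-1)\to1$, so the two bounds coincide in the limit.

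I expect the lower bound to be the main obstacle: it forces a commitment to an explicit feasible distribution, and the natural ``uniform cumulative spacing'' choice recovers only $\gamma^{*}_1$ rather than $\gamma^{*}_2$ because the index shift in the discrete-to-continuous comparison discards one subinterval of width $\delta$ --- precisely the source of the extra factor $1-1/(K-1)$. One has to be careful that the tangent-line (convexity) estimate on $t^N$ and the monotonicity of $f$ are applied in the one order for which the whole chain of inequalities points in the right direction; everything else (the cumulative-variable substitution, the geometric-series identity, and the passage from bounds on $\gamma^{*}$ to bounds on $\eta$) is routine.
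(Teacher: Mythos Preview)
Your proposal is correct and follows essentially the same route as the paper: the cumulative substitution $a_i=\sum_{j\le i}p^j$, the identity $\int_0^x \frac{N t^{N-1}}{1-t}\,dt=N\sum_{i\ge N}x^i/i$, the upper bound via $1-t\le 1-a_{i-1}$ on $[a_{i-1},a_i]$, and the lower bound via the uniformly spaced test point $p^i=(1-1/r)/(K-1)$ with a one-interval index shift are all exactly what the paper does. The only cosmetic difference is in the lower-bound step: the paper compares $\int_{\theta_{i-1}}^{\theta_i}\frac{N\theta^{N-1}}{1-\theta_{i-1}}\,d\theta$ directly with $\int_{\theta_{i-2}}^{\theta_{i-1}}f$ using that the intervals have equal length and $f$ is increasing, whereas you split this into a convexity estimate $a_i^N-a_{i-1}^N\ge N a_{i-1}^{N-1}\delta$ followed by $f(a_{i-1})\delta\ge\int_{a_{i-2}}^{a_{i-1}}f$; both arrive at the same telescoped integral over $[0,a_{K-2}]$.
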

\begin{proof}
The proof is given in Appendix \ref{sec:appendix6}.
\end{proof}

\begin{corollary}
\label{corollary:elr-si-corr-many_buyers}
Bounds given in Proposition \ref{proposition:elr-si-thm-many-buyers} are tight asymptotically as $K\rightarrow \infty$. Moreover, $\lim_{N\rightarrow\infty} \eta(r,K,N) = 0$ and  $\lim_{r\rightarrow\infty}\left(\lim_{K\rightarrow\infty} \eta(r,K,N)\right) = 1$. Also, keeping $r$ and $K$ fixed, the worst case ELR goes to zero as $N$ goes to infinity at the rate $O\big((1-1/r)^N\big)$.
\end{corollary}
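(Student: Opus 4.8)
The plan is to route everything through the sandwich $\gamma^{*}_1(r,K,N)\le\gamma^{*}(r,K,N)\le\gamma^{*}_2(r,K,N)$ provided by Proposition~\ref{proposition:elr-si-thm-many-buyers}, together with the fact (Proposition~\ref{proposition:elr-si-main-thm}) that the worst case ELR equals $\phi\big(\gamma^{*}(r,K,N)\big)$, where $\phi(t)\triangleq t/(c+t)$ and $c\triangleq(r^N-(r-1)^N)/r^{N-1}=r\big(1-(1-1/r)^N\big)$. Here $\phi$ is continuous and strictly increasing on $[0,\infty)$, the constant $c$ satisfies $c\ge 1$ (since $(1-1/r)^N\le 1-1/r$), so $\phi(t)\le t$ for all $t\ge 0$, and $c\to N$ as $r\to\infty$. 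Everything then reduces to elementary estimates on $\gamma^{*}_1$ and $\gamma^{*}_2$ plus monotonicity and continuity of $\phi$.

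First, for the tightness claim, I would note that $\gamma^{*}_2(r,K,N)$ does not depend on $K$; abbreviate it $\gamma^{*}_2(r,N)$. Each summand of $\gamma^{*}_1(r,K,N)=N\sum_{i\ge N}\frac{1}{i}(1-1/r)^i\big(1-1/(K-1)\big)^i$ increases monotonically, as $K\to\infty$, to the corresponding summand of $\gamma^{*}_2(r,N)$, and $\sum_{i\ge N}\frac{1}{i}(1-1/r)^i<\infty$ (dominated by a convergent geometric series); so monotone convergence gives $\gamma^{*}_1(r,K,N)\uparrow\gamma^{*}_2(r,N)$. By the squeeze $\gamma^{*}(r,K,N)\to\gamma^{*}_2(r,N)$, and continuity of $\phi$ forces both bounds of Proposition~\ref{proposition:elr-si-thm-many-buyers} to converge to $\phi\big(\gamma^{*}_2(r,N)\big)$; in particular $\lim_{K\to\infty}\eta(r,K,N)=\phi\big(\gamma^{*}_2(r,N)\big)$.

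Both the $N$-limit and the rate follow from one estimate, obtained by telescoping a geometric series:
\[
\gamma^{*}_2(r,K,N)=N\sum_{i\ge N}\frac{1}{i}(1-1/r)^i\le N\cdot\frac{1}{N}\sum_{i\ge N}(1-1/r)^i=r(1-1/r)^N.
\]
Since $\phi(t)\le t$, this gives $\eta(r,K,N)\le\gamma^{*}_2(r,K,N)\le r(1-1/r)^N$; with $r$ and $K$ fixed this is $O\big((1-1/r)^N\big)$, yielding simultaneously the stated rate and $\lim_{N\to\infty}\eta(r,K,N)=0$. For $\lim_{r\to\infty}\big(\lim_{K\to\infty}\eta(r,K,N)\big)=1$, I would use the identity $\lim_{K\to\infty}\eta(r,K,N)=\gamma^{*}_2(r,N)/\big(c+\gamma^{*}_2(r,N)\big)$ from the previous step and let $r\to\infty$: the denominator constant $c=r\big(1-(1-1/r)^N\big)\to N$ stays bounded, while $\gamma^{*}_2(r,N)=N\sum_{i\ge N}\frac{1}{i}(1-1/r)^i\uparrow N\sum_{i\ge N}\frac{1}{i}=\infty$ by monotone convergence, since the harmonic tail diverges; hence the ratio tends to $1$.

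I do not expect a deep obstacle. The only care needed is in the two interchanges of limit and infinite sum — the $K\to\infty$ one is routine (a uniformly dominated, termwise monotone sum), whereas the $r\to\infty$ one is precisely where the divergence of $\sum 1/i$ is exploited to force the blow-up of $\gamma^{*}_2$ — and in keeping track that the constant $c$ stays in the bounded range $[1,r)$ and tends to $N$ as $r\to\infty$, so that it never interferes with any of the limits. The telescoped sum $\sum_{i\ge N}(1-1/r)^i=r(1-1/r)^N$ carries all the quantitative content.
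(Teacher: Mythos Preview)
Your argument is correct and follows essentially the same route as the paper: use $\gamma^{*}_1\to\gamma^{*}_2$ as $K\to\infty$ for tightness, bound $\gamma^{*}_2$ by the geometric tail to get the $N\to\infty$ limit and the rate, and use $\gamma^{*}_2\to\infty$ as $r\to\infty$ for the last limit. The paper obtains the slightly sharper constant $\eta(r,K,N)\le(1-1/r)^N$ by substituting the bound $\gamma^{*}_2\le (r-1)^N/r^{N-1}$ directly into $\phi$ and simplifying, rather than using $\phi(t)\le t$; both yield the stated $O\big((1-1/r)^N\big)$ rate.
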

\begin{proof}
The first part follows easily from the observation that $\lim_{K\rightarrow \infty}\gamma^{*}_1(r,K,N) = \gamma^{*}_2(r,K,N)$. The third part follows since $\lim_{r\rightarrow\infty}\gamma^{*}_2(r,K,N) = \infty$. For the second part, notice that for any $0 < a < 1$,
\beqn
N\sum_{i=N}^{\infty}\frac{a^i}{i} < \sum_{i=N}^{\infty}a^i = \frac{a^N}{1-a} \rightarrow 0 ~\mbox{as}~ N\rightarrow \infty.
\eeqn
From above, $\gamma^{*}_2(r,K,N) \leq (r-1)^N/r^{N-1}$. Hence, $\eta(r,K,N) \leq (1-1/r)^N$.
\end{proof}

\subsection{Single item auctions with different priors}
As described earlier, if priors are different, the seller might set different reserve prices for different buyers. In addition, he need not always award the item to the buyer with the highest reported value for the item. We first obtain a lower bound on the worst case ELR that is almost the same as the worst case ELR with only one buyer.

\begin{proposition} \label{proposition:si-asy-elr}
For single item auctions with multiple buyers with different priors, the worst case ELR, denoted by $\eta(r,K,N)$ \eqref{eq:worst-elr}, satisfy:
\beqn
\eta(r,K,N) \geq \frac{\gamma^{*}(r,K,1)-\left(1-\frac{1}{r}\right)}{1+\gamma^{*}(r,K,1)},
\eeqn
where $\gamma^{*}$ is as defined in Proposition \ref{proposition:elr-si-thm-one-buyer}.
\end{proposition}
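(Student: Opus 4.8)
The plan is to prove the bound by exhibiting one explicit member of $\mc{D}_{r,K}$ with $N$ buyers whose ELR equals the right-hand side; since $\eta(r,K,N)$ is a supremum over $\mc{D}_{r,K}$, that suffices. I take buyer~$1$ to be the worst-case single-buyer instance behind Proposition~\ref{proposition:elr-si-thm-one-buyer}, and buyers $2,\dots,N$ to be ``negligible''. Concretely, set $x_1^i = r^{(i-1)/(K-1)}$ for $1 \le i \le K$, and let $\mb{p}_1$ be the (unique) distribution on these values with $\Prob{X_1 \ge x_1^i} = r^{-(i-1)/(K-1)}$, i.e.\ $p_1^i = r^{-(i-1)/(K-1)} - r^{-i/(K-1)}$ for $i<K$ and $p_1^K = 1/r$; this is the maximizer of the program in Proposition~\ref{proposition:elr-si-main-thm} for $N=1$. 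It satisfies $x_1^i\,\Prob{X_1\ge x_1^i} = 1$ for all $i$, and $\sum_{i=1}^{K-1} p_1^i x_1^i = (K-1)\bigl(1 - r^{-1/(K-1)}\bigr) = \gamma^*(r,K,1)$, which I abbreviate $\gamma^*$. Let buyers $2,\dots,N$ each be deterministic with value $x^1 = 1$ (so $K_n = 1$). Since $\min_n x_n^1 = 1$ and $\max_n x_n^{K_n} = r$, this instance lies in $\mc{D}_{r,K}$.

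First I record the structure of buyer~$1$ in isolation. In the graphical construction preceding \eqref{eq:virtual-bid}, $h_1^i = -x_1^{i+1}\,\Prob{X_1 \ge x_1^{i+1}} = -1$ for $0 \le i \le K-1$, $h_1^K = 0$, and $g_1^i = 1 - r^{-i/(K-1)}$, so the points are colinear at height $-1$ up to $g_1^{K-1} = 1 - 1/r$ and then rise to $(1,0)$. Hence $\overline{w}_1(x_1^i) = 0$ for $i < K$ and $\overline{w}_1(x_1^K) = 1/p_1^K = r$. Equivalently, by \eqref{eq:reserve-price}, buyer~$1$'s reserve price is $\max_i x_1^i = r$ (every value attains the maximum of $x_1^i\,\Prob{X_1 \ge x_1^i}$), so in the optimal auction he wins only when he bids $r$.

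Now I compute the two ingredients of $\elr{N}$ for the $N$-buyer instance. The MSW is unchanged by the added buyers: each always has value $1 \le X_1$, so $\max_{n\in\setN} X_n = X_1$ and $\text{MSW} = \E{X_1} = \sum_{i=1}^K p_1^i x_1^i = \gamma^* + 1$. Buyer~$1$'s reserve price still depends only on his own law, hence is still $r$; each added buyer $n\ge 2$ has monotone virtual valuation $w_n(1) = 1 > 0$. Thus the maximum-weight singleton in Algorithm~\ref{alg:mwa} is $\{1\}$ when $X_1 = r$ (weight $\overline{w}_1(r) = r > 1$, probability $1/r$), contributing value $r$, and is some $\{n\}$, $n\ge 2$, when $X_1 < r$ (where buyer~$1$'s virtual valuation $\overline{w}_1(x_1^j) \le 0$ is discarded, so the weight is $1$, total probability $1-1/r$), contributing value $1$. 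The realized welfare is therefore $\tfrac1r\cdot r + \bigl(1-\tfrac1r\bigr)\cdot 1 = 2 - \tfrac1r$ for any tie-breaking rule, and
\[
\elr{N} = \frac{(\gamma^* + 1) - (2 - 1/r)}{\gamma^* + 1} = \frac{\gamma^* - (1 - 1/r)}{1 + \gamma^*},
\]
which gives the asserted lower bound on $\eta(r,K,N)$.

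The part needing care is the allocation bookkeeping once the extra buyers are present: one must confirm (i) that they do not shift buyer~$1$'s reserve price, (ii) that their strictly positive virtual valuation $1$ makes the winner set nonempty whenever $X_1<r$, so the per-outcome welfare loss there is exactly $X_1 - 1$ and not more, (iii) that they never tie with or exceed buyer~$1$'s virtual valuation $r$ at $X_1 = r$, and (iv) that the efficient tie-break $\widetilde{\bs{\pi}}^o$ adds no welfare — all forced by the explicit numbers but dependent on the tie conventions of Section~\ref{sec:elr} (for equal MVVs and for equal values). I expect (ii) to carry the actual content. A purely cosmetic point: if the model is read as requiring $K_n \ge 2$, replace each of buyers $2,\dots,N$ by a two-point law on $\{1,\,1+\epsilon\}$ with mass $1-\epsilon$ on $1$ and let $\epsilon \downarrow 0$, using continuity of both $\text{MSW}$ and the realized welfare in the parameters; the bound is unchanged.
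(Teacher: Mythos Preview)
Your proof is correct and follows essentially the same approach as the paper: one ``worst-case'' single buyer drawn from the optimizer of Proposition~\ref{proposition:elr-si-thm-one-buyer}, with the remaining $N-1$ buyers made negligible so the MSW is $\E{X_1}=1+\gamma^*$ and the optimal auction picks up only value $1$ whenever $X_1<r$. The paper places the small buyers in $[1,1+\epsilon)$ and bounds the welfare loss from below before letting $\epsilon\to0$, whereas you put them deterministically at $1$ and compute the ELR exactly; this is a cosmetic streamlining of the same construction, and your own $\epsilon$-remark already covers the case where $K_n\ge2$ is required.
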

\begin{proof}
The proof is given in Appendix \ref{sec:appendix7}.
\end{proof}
For large values of $r$ and $K$, the lower bound of Proposition \ref{proposition:si-asy-elr} is close to the worst case ELR for the single buyer case given by Proposition \ref{proposition:elr-si-thm-one-buyer}. Moreover, it is independent of $N$ and shows that the ELR does not go below this lower bound even if there are large number of buyers.

The following example shows the worst ELR computation for a special case of single item auctions with binary valued buyers.

\begin{example} \label{example:asy_elr_eg2}
Consider $N$ binary valued buyers competing for one item. Let the value of the item for a buyer $n$ be denoted by the random variable $X_n$ taking values $(L, H_n)$ with probabilities $(1-p_n, p_n)$ respectively. Buyers are numbered such that $L < H_1 < H_2 < \ldots H_N$. For any buyer $n$, the virtual-valuation function satisfies $w_n(H_n) = H_n$ and $w_n(L) < L$. Hence, if there is at least one buyer with value greater than $L$, the optimal auction will allocate the item to the buyer with the highest value. If all buyers have their values equal to $L$, but there is at least one buyer $m$ such that his virtual valuation at $L$ is positive, i.e., $w_m(L) > 0$, then the welfare generated by an optimal auction will be $L$ (irrespective of who gets the item). Thus, loss in efficiency occurs only when $X_n = L$ and $w_n(L) \leq 0$ for all $n$. Notice that, $w_n(L) \leq 0$ is equivalent to $p_n H_n \geq L$. The MSW in this case is: 
\beqn
\mbox{MSW} = \E{\max_{1\leq n \leq N}X_n} = p_N H_N + \sum_{n=1}^{N-1}\left(\prod_{m=n+1}^{N}(1-p_m)p_nH_n\right) + \prod_{n=1}^{N}(1-p_n)L,
\eeqn
while the loss in the realize social welfare is $\prod_{n=1}^{N}(1-p_n)L$. Hence, the ELR is given by:
\begin{align*}
\mbox{ELR} & = \frac{\prod_{n=1}^{N}(1-p_n)L}{p_N H_N + \sum_{n=1}^{N-1}\left(\prod_{m=n+1}^{N}(1-p_m)p_nH_n\right) + \prod_{n=1}^{N}(1-p_n)L}, \\
& \leq \frac{\prod_{n=1}^{N}(1-p_n)L}{L + \sum_{n=1}^{N-1}\left(\prod_{m=n+1}^{N}(1-p_m)L\right) + \prod_{n=1}^{N}(1-p_n)L}, \\
& = \frac{1}{1+ \sum_{n=1}^N\left(\prod_{m=1}^n(1-p_m)^{-1}\right)} \leq \frac{1}{N+1}, 
\end{align*}
where the first inequality follows from $p_n H_n \geq L$ for all $n$.
\end{example}

\section{Discussion} \label{sec:discussion}
\begin{enumerate}[(a)]
\item
\textit{On tie breaking}: Although we have set up the worst case ELR problem under breaking ties in the favor of the most efficient allocation among the set of optimal allocations, we can also define the worst case ELR where ties are broken in the favor of the least efficient allocation among the set of optimal allocations. The results of Section \ref{sec:elr-mi-bin-val} still hold true. Also, the lower bound on the ELR of Section \ref{sec:elr-single-item} for single item auctions with multiple buyers is still a valid lower bound under this tie breaking. 
\item
\textit{Bounds on information rent}: The expected difference between the revenue that the seller could have extracted if he exactly knew the buyers' type (same as the MSW) and the revenue collected by an optimal auction under private types is called \textit{information rent}. Because of the IR constraint, the optimal revenue cannot be larger than the realized social welfare. Hence, the ELR is less than or equal to the ratio of information rent and the MSW. Also, notice that the proof of Proposition \ref{proposition:elr-mi-bin-main} bounds the worst case ELR by finding an upper bound on the ratio of information rent and the MSW.
\end{enumerate}

\section{Conclusions} \label{sec:conclusion}
In this work, we highlighted the differences between the objectives of revenue maximization and social welfare maximization. We quantified this as the loss in efficiency in optimal auctions and obtained bounds on the same for various cases. A summary of the results is presented in Table~\ref{tab:results-summary}.

\begin{table}[ht]
	\centering
		\begin{tabular}{ | c | c |}
			\hline
			\multicolumn{1}{|c|}{\textbf{Case}} & \multicolumn{1}{|c|}{\textbf{ELR bounds}} \\ \hline
			
			$N$ binary valued single-parameter buyers & ELR $\displaystyle \leq \frac{r-1}{2r-1} \leq \frac{1}{2}$ \\[7pt] \hline
			
			$N$ binary valued single-parameter i.i.d.. buyers, & \multirow{2}{*}{ELR $\displaystyle \leq \text{min}\left\{\frac{S}{S + N}, \frac{r-1}{2r-1}\right\}$} \\
			auction of $S$ identical items, $S \leq N$ & \\ [2pt] \hline
			
			Single item auction, $1$ buyer, $K$ discrete values & ELR = $\displaystyle \frac{\gamma^{*}}{1+\gamma^{*}}, ~~ \gamma^{*} = (K-1)\left(1-r^{\frac{-1}{K-1}}\right)$
			\\[8pt] \cline{1-2}
			
			\multirow{2}{*}{Single item auction, $N$ i.i.d. buyers, $K = 2$} & \multirow{2}{*}{ELR $= \left[\sum_{i=0}^N\left(\frac{r}{r-1}\right)^i\right]^{-1} 
			\leq \frac{1}{N+1}$} \\
			& \\[4pt] \cline{1-2}
			& \\
			\multirow{3}{*}{Single item auction, $N$ i.i.d. buyers, $K > 2$} & $\displaystyle \frac{\gamma_1}{1+\gamma_1} 
			\leq \mbox{ELR} \leq \frac{\gamma_2}{1+\gamma_2}$, \\[7pt] 
			& $\gamma_1 = N\left[\sum_{i=N}^{\infty}\frac{1}{i}\left(1-\frac{1}{r}\right)^i\left(1-\frac{1}{K-1}\right)^i\right]$, \\[7pt]
			& $\gamma_2 = N\left[\sum_{i=N}^{\infty}\frac{1}{i}\left(1-\frac{1}{r}\right)^i\right]$ \\[4pt] \hline
		\end{tabular}
	\caption{Efficiency loss in revenue optimal auctions - summary of the results.}
	\label{tab:results-summary}
\end{table}

An interesting extension would be to show that even if the private valuations (or types) of buyers can take more than two values, for optimal auctions with single-parameter buyers with independent (not necessarily identically distributed) private values, the worst case loss in efficiency is no worse than that with only one buyer. Another possible extension would be to establish the conjecture that the ELR bound $S/(S + N)$ holds for optimal auctions with binary valued single-parameter i.i.d. buyers, where any possible set of winners has cardinality at most $S$, but not any set of buyers with cardinality at most $S$ can win simultaneously.

\appendix
\section{Proof of Proposition \ref{proposition:elr-mi-bin-main}} \label{sec:appendix1}

We will prove a somewhat stronger result.  Namely, that if $H_n/L_n \leq r$ for all $n \in \setN$, then $\text{ELR}(\bs{\pi}^o) \leq (r-1)/(2r-1)$ for any optimal allocation rule $\bs{\pi}^o$. The realized social welfare for any allocation rule $\bs{\pi}$ satisfying the conditions of Proposition \ref{proposition:opt-revenue} is at least $R(\bs{\pi})$. Thus, it suffices to show that $R(\bs{\pi}^o) \geq r\text{MSW}/(2r-1)$. By the optimality of~$\bs{\pi}^o$, $R(\bs{\pi}^o) \geq R(\widehat{\bs{\pi}})$ for any other allocation rule $\widehat{\bs{\pi}}$ satisfying the conditions of Proposition \ref{proposition:opt-revenue}. Thus, it suffices to produce such an allocation rule $\widehat{\bs{\pi}}$ satisfying $R(\widehat{\bs{\pi}}) \geq r\text{MSW}/(2r-1)$.

We construct  $\widehat{\bs{\pi}}$ by starting with some efficient allocation rule $\bs{\pi}^e,$ and modifying it. Specifically,~$\widehat{\bs{\pi}}$ produces the same set of winners as $\bs{\pi}^e$, except that any buyer $n$ with $X_n = L_n$ and $w_n(L_n) \leq 0$ is not a winner under $\widehat{\bs{\pi}}$. The allocation rule $\widehat{\bs{\pi}}$ satisfies the conditions of Proposition~\ref{proposition:opt-revenue}, just as $\bs{\pi}^e$ does. Since
\beqn
R(\widehat{\bs{\pi}}) = \sum_{n=1}^N\left(p_n q_n^e(H_n) H_n + (1-p_n) q_n^e(L_n) \pos{w_n(L_n)}\right),
\eeqn
and $\text{MSW} = \sum_{n=1}^N \left(p_n q_n^e(H_n) H_n + (1-p_n) q_n^e(L_n) L_n\right)$,
it suffices to show that:
\beqn
p_n q_n^e(H_n) H_n + (1-p_n) q_n^e(L_n) \pos{w_n(L_n)} \geq \frac{r}{2r-1}\left(p_n q_n^e(H_n) H_n + (1-p_n) q_n^e(L_n) L_n\right),
\eeqn
for all $n \in \setN$. Since $q_n^e(H_n) \geq q_n^e(L_n)$ and $r/(2r -1) < 1$, it is sufficient to prove:
\begin{align} \label{eq:elr-mi-bin-main-eq1}
p_n q_n^e(L_n) H_n + (1-p_n) q_n^e(L_n) \pos{w_n(L_n)} & \geq \frac{r}{2r-1}\left(p_n q_n^e(L_n) H_n + (1-p_n) q_n^e(L_n) L_n\right), \nonumber \\
\text{or, equivalently, } p_n H_n + (1-p_n)\pos{w_n(L_n)} & \geq \frac{r}{2r-1}\left(p_n H_n + (1 - p_n)L_n\right).
\end{align}

Define $r_n \triangleq H_n/L_n \leq r$. We prove the inequality \eqref{eq:elr-mi-bin-main-eq1} by considering the following two cases.

First assume that $p_n r_n \geq 1$. Then $w(L_n) \leq 0$, and proving inequality \eqref{eq:elr-mi-bin-main-eq1} simplifies to showing that:
\beqn
\frac{1-p_n}{p_n r_n + 1 - p_n} \leq \frac{1 - r}{2r-1}.
\eeqn
However, the above follows easily from Example \ref{example:single_buyer} and by noticing that:
\beqn
\frac{1 - r_n}{2r_n-1} \leq \frac{1 - r}{2r-1}.
\eeqn

Next, assume that $p_n r_n < 1$. Then, $w(L) > 0$, and $p_n H_n + (1-p_n)\pos{w_n(L_n)} = L_n$, and proving inequality \eqref{eq:elr-mi-bin-main-eq1} simplifies to showing that:
\beqn
p_n (r_n - 1) \leq 1 - \frac{1}{r}.
\eeqn
Since $p_n r_n < 1$, the left side of above is less than or equal to $1 - 1/r_n \leq 1 - 1/r$.
This completes the proof.

\section{Proof of Proposition \ref{proposition:elr-mi-sym-thm}} \label{sec:appendix2}
Because $X_n$'s are i.i.d. random variables, the virtual-valuation functions $w_n$'s are same for all $n$, we drop the subscript $n$ and denote them by $w$. Given any bid vector $\mb{v}$, let $A^e(\mb{v})$ denote any efficient allocation and $A^o(\mb{v})$ denote any optimal allocation. Thus, $A^e(\mb{v}) \in \argmax_{A \in \setA} \big(\sum_{n \in A}v_n\big)$, while $A^o(\mb{v}) \in \argmax_{A \in \setA} \big(\sum_{n \in A}w(v_n)\big)$. Since $\setA$ contains all subsets of $\setN$ of size less than or equal to $S$, and $H = w(H) > L > w(L)$, we must have $\sum_{n \in A^e(\mb{v})}\indicator{v_n = H} = \sum_{n \in A^o(\mb{v})}\indicator{v_n = H}$. If $w(L) > 0$, then clearly, $\sum_{n \in A^e(\mb{v})}\indicator{v_n = L} = \sum_{n \in A^o(\mb{v})}\indicator{v_n = L}$. Hence, an optimal allocation is efficient if $w(L) > 0$. Assuming $w(L) \leq 0$, an efficient allocation selects the buyers corresponding to the top $S$ bids, while an optimal allocation selects only the buyers who bid $H$ and no more than~$S$ such buyers. Thus, an optimal allocation is not efficient if $w(L) \leq 0$, and if there are less than $S$ buyers with type $H$. So, for the remainder of the proof, we assume that $w(L) \leq 0$, or equivalently, $pH \geq L$.

Let $Y = \sum_{n=1}^N \indicator{X_n = H}$ be the random variable denoting the number of buyers with value $H$. Clearly, $Y \sim$ Binomial$(N,p)$. Also, since the ELR is invariant to scaling of $H$ and $L$, we can set $L = 1$ and $pH \geq 1$. With this, the MSW is simply $\E{(S\wedge Y)H + S - S\wedge Y}$ while the loss in the social welfare realized by an optimal auction, when compared with the MSW, is $\E{S - S \wedge Y}$, where $\wedge$ is the min operator. The ELR only depends on $H$, $p$, and $\setA$. Since $\setA$ is same throughout the proof, we use $\text{ELR}(H,p)$ to denote the $ELR$ function defined by \eqref{eq:elr}. Hence, 
\beq{app3_eq1}
\text{ELR}(H,p) = \frac{\E{S - S \wedge Y}}{\E{(S\wedge Y)H + S - S\wedge Y}} \leq \frac{\E{S - S \wedge Y}}{\E{\frac{S\wedge Y}{p} + S - S\wedge Y}} = \text{ELR}\left(\frac{1}{p},p\right),
\eeq
where the inequality is because of $pH \geq 1$. It is easily verified that:
\beq{app3_eq2}
\frac{\E{S - S \wedge Y}}{\E{\frac{S\wedge Y}{p} + S - S\wedge Y}} \leq \frac{S}{S + N} ~ \Leftrightarrow ~ \E{S \wedge Y} \geq \frac{\E{Y}S}{\E{Y}+S},
\eeq
where $\E{Y} = Np$. Thus, it is sufficient to prove $\E{S \wedge Y} \geq \E{Y}S/(\E{Y}+S)$. We use the following result:

\begin{proposition}[Hoeffding \cite{Hoeffding53}] \label{proposition:hoeffding}
Let $T = \sum_{j=1}^N I_j$, where $I_1, I_2, \ldots, I_N$ are independent Bernoulli random variables with parameters $p_1, p_2, \ldots, p_N$. If $\E{T} = Np$ and $f: \R\mapsto\R $ is a function satisfying:
\beqn
f(j+2)-2f(j+1) + f(j) \geq 0, \quad j = 0, 1, \ldots, N-2,
\eeqn
then,
\beqn
\E{f(T)} \leq \sum_{j=0}^Nf(j)\binom{N}{j}p^j(1-p)^{N-j}.
\eeqn
\end{proposition}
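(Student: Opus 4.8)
The plan is to prove the extremal inequality directly: among all vectors of Bernoulli parameters $(p_1,\dots,p_N)$ with a fixed sum $Np$, the quantity $\E{f(T)}$ is largest when all parameters are equal, and in that case $T$ is exactly Binomial$(N,p)$, which gives the right-hand side. The engine is a pairwise \emph{smoothing} step: replacing two of the parameters by their common average never decreases $\E{f(T)}$. Everything else is a compactness argument to promote this single step to the global optimum.

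First I would isolate the two coordinates to be averaged. Fix indices $i\neq j$, hold $p_k$ for $k\notin\{i,j\}$ fixed, and write $T = I_i + I_j + S$ with $S \triangleq \sum_{k\neq i,j} I_k$ independent of $(I_i,I_j)$. Since $\E{f(T)} = \E{\,\E{f(S+I_i+I_j)\mid S}\,}$, it suffices to show that, for each fixed value $s$ of $S$ (necessarily $s\in\{0,1,\dots,N-2\}$, so the second-difference hypothesis applies at $j=s$), the function $g(a,b)\triangleq\E{f(s+I_i+I_j)}$, with $a\triangleq\Prob{I_i=1}$ and $b\triangleq\Prob{I_j=1}$, does not decrease when $(a,b)$ is replaced by $\bigl(\tfrac{a+b}{2},\tfrac{a+b}{2}\bigr)$. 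Expanding,
\beqn
g(a,b) = (1-a)(1-b)f(s) + \bigl(a(1-b)+b(1-a)\bigr)f(s+1) + ab\,f(s+2),
\eeqn
which rearranges to $g(a,b) = ab\bigl(f(s+2)-2f(s+1)+f(s)\bigr) + (1-a-b)f(s) + (a+b)f(s+1)$. The second-difference hypothesis makes the coefficient of $ab$ nonnegative, so for fixed $a+b$ the value $g(a,b)$ is nondecreasing in the product $ab$; and $ab$ is maximized, subject to $a+b$ fixed, exactly at $a=b=(a+b)/2$. Hence averaging the pair $(p_i,p_j)$ does not decrease $g$ for any $s$, so it does not decrease $\E{f(T)}$, and it obviously preserves $\sum_k p_k$ and keeps every parameter in $[0,1]$.

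It remains to pass from ``one smoothing step helps'' to ``the uniform vector is optimal,'' and here I would use compactness rather than track an infinite sequence of averagings. The map $\mathbf{p}\mapsto\phi(\mathbf{p})\triangleq\E{f(T)}$ is a polynomial, hence continuous, on the compact set $\{\mathbf{p}\in[0,1]^N:\sum_k p_k=Np\}$, so it attains its maximum; among all maximizers pick one, $\mathbf{p}^{*}$, minimizing $\sum_k (p_k^{*})^2$. If the coordinates of $\mathbf{p}^{*}$ were not all equal, some $p_i^{*}\neq p_j^{*}$, and averaging that pair produces (by the previous paragraph) another maximizer whose $\sum_k p_k^2$ is strictly smaller, by $\tfrac12(p_i^{*}-p_j^{*})^2$ --- contradicting the choice of $\mathbf{p}^{*}$. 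Therefore every coordinate of $\mathbf{p}^{*}$ equals $p$, so under $\mathbf{p}^{*}$ the sum $T\sim$ Binomial$(N,p)$, and for the original parameters $\E{f(T)}\leq\phi(\mathbf{p}^{*}) = \sum_{j=0}^{N} f(j)\binom{N}{j}p^j(1-p)^{N-j}$.

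The routine part is the algebra that isolates the $ab$ term; the only place needing care is the last step, where the device ``minimize $\sum_k p_k^2$ among maximizers'' cleanly replaces a potentially delicate convergence argument and forces a uniform maximizer. (Small cases like $N\leq 2$, and boundary configurations where the constraint $\sum_k p_k = Np$ pushes some $p_k$ to $0$ or $1$, require no separate treatment, since all the operations above stay inside the same compact domain.) One could alternatively phrase the whole argument via Schur-concavity of $\phi$ together with the fact that the uniform vector is majorized by every vector with the same sum, but the compactness version is self-contained.
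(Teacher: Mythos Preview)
Your proof is correct. The pairwise smoothing computation is right (the rearrangement isolating the $ab$ term with coefficient $f(s+2)-2f(s+1)+f(s)\geq 0$ is exactly what is needed), and the compactness device of selecting a maximizer with minimal $\sum_k p_k^2$ is a clean way to force the uniform vector without tracking an infinite averaging sequence.

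As for comparison with the paper: the paper does \emph{not} prove this proposition at all. It is stated with attribution to Hoeffding~\cite{Hoeffding53} and used as a black box inside the proof of Proposition~\ref{proposition:elr-mi-sym-thm}. So there is no ``paper's own proof'' to compare against; you have supplied a self-contained argument where the authors chose to cite. Your approach (pairwise averaging plus a variational selection among maximizers) is in fact the classical route to this inequality and is essentially how Hoeffding's original argument proceeds, though Hoeffding phrases the passage to the uniform vector via an iterative limit rather than your minimum-$\ell^2$ trick. The Schur-concavity alternative you mention at the end is also valid and would be the modern one-line summary: $\phi$ is symmetric and the smoothing step is precisely the verification that $\phi$ is Schur-concave, after which the result follows since $(p,\dots,p)$ is majorized by every $(p_1,\dots,p_N)$ with the same sum.
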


Take $f(j) = - (S \wedge j)$. Consider independent Bernoulli random variables $I_1, I_2, \ldots, I_{N+L}$ where $I_j$ is Bernoulli$(p)$ for $j\leq N$, and is equal to $0$ for $j > N$. Define $\widetilde{T} = \sum_{j=1}^{N+L}I_j$ and let $T^{'}$ be a random variable distributed as Binomial$\big(N+L,Np/(N+L)\big)$. Using the above proposition, $\E{S \wedge \widetilde{T}} \geq \E{S \wedge T^{'}}$. But, $\E{S \wedge \widetilde{T}} = \E{S \wedge Y}$, since $I_j = 0$ for $j>N$. This implies $\E{S \wedge Y} \geq \E{S \wedge T^{'}}$. As $L \rightarrow \infty$, $T^{'} \rightarrow Z$ in distribution, where $Z \sim$ Poission$(Np)$. Thus, $\E{S \wedge Y} \geq \E{S \wedge Z}$. Hence, it remains to show that $\E{S \wedge Z} \geq \E{Z}S/(\E{Z}+S) = \E{Y}S/(\E{Y}+S)$. Equivalent forms of the desired inequality are given as follows:

\begin{align} \label{eq:app3_eq3}
& \E{S \wedge Z} \geq \frac{\E{Z}S}{\E{Z}+S}, \nonumber \\
& \Leftrightarrow S - \E{S \wedge Z} \leq \frac{S^2}{S + \E{Z}}, \nonumber \\
& \Leftrightarrow S^2 \geq (Np + S)(S - \E{S \wedge Z}), \nonumber \\
& \Leftrightarrow S^2e^{\lambda} \geq (\lambda + S) \sum_{j=0}^{S - 1}\frac{(S - j)\lambda^j}{j!} ~~~ \mbox{where}~ \lambda=Np.
\end{align}

To prove \eqref{eq:app3_eq3}, we compare the coefficients of $\lambda^j$ on left and right sides. We only need to check for $0 \leq j \leq S$. For $j=0$, coefficients on both the left and the right sides are $S^2$. For $j = S$, since $S^2/S! \geq 1/(S - 1)!$, the left coefficient is greater than the right one. For $1\leq j \leq S -1$, the left coefficient is greater than the right coefficient if:
\begin{align*}
& \frac{S^2}{j!} \geq \frac{S(S-j)}{j!} + \frac{S - j +1}{(j-1)!}, \\ 
& \Leftrightarrow S^2 \geq S(S - j) + j(S -j +1) = S^2 - j^2 -1,
\end{align*} 
which is true. Hence, $\E{S \wedge Y} \geq \E{S \wedge Z} \geq \E{Z}S/(\E{Z}+S) = \E{Y}S/(\E{Y}+S)$. This along with \eqref{eq:app3_eq1} and \eqref{eq:app3_eq2} implies:
\beqn
\text{ELR}(H,p) \leq \text{ELR}\left(\frac{1}{p},p\right) \leq \frac{S}{S + N}.
\eeqn

To show that the bound is approachable, notice that:
\beqn
\lim_{p\rightarrow 0}\text{ELR}\left(\frac{1}{p},p\right) = \lim_{p\rightarrow 0}\left(\frac{\E{S - S \wedge Y}}{\E{\frac{S\wedge Y}{p} + S - S\wedge Y}}\right) = \frac{S}{S + N},
\eeqn
since $\lim_{p\rightarrow 0} \E{S\wedge Y}/p = N$ and $\lim_{p\rightarrow 0}\E{S\wedge Y} = 0$. The proof is complete.

\section{Proof of Proposition \ref{proposition:elr-mi-conj}} \label{sec:appendix3}
Notice that here $\setA$ is any collection of all possible sets of winners such that if $A \in \setA$ then $|A| \leq S$. In particular, $\setA$ contains all singletons $\{n \}$, $n \in \setN$. Similar to the proof of Proposition \ref{proposition:elr-mi-sym-thm}, we take $L=1$ and $pH \geq 1$, use $w$ for the common virtual-valuation function. Then $\pos{w(L)} = 0$ (reserve price is equal to $H$) and $w(H) = H$.

Given a bid vector $\mb{v}$, define the functions $f_H$ and $f_L$ as $f_H(A,\mb{v}) \triangleq \sum_{n \in A} \indicator{v_n = H}$ and $f_L(A,\mb{v}) \triangleq \sum_{n \in A} \indicator{X_n = L}$, where $A \subseteq \setN$. Then, an optimal allocation rule selects a winner set from $\argmax_{A \in \setA}f_H(A,\mb{v})$ and the social welfare realized is $\E{\max_{A \in \setA}\fh{A}H}$. The MSW is $\E{\max_{A \in \setA}\left[\fh{A}H + \fl{A}\right]}$. The ELR only depends on $H$, $p$, and $\setA$. Since $\setA$ is same throughout the proof, we use $\text{ELR}(H,p)$ to denote the $ELR$ function defined by \eqref{eq:elr}. Then, 
\begin{align} \label{eq:app4_eq1}
\text{ELR}(H,p) & = \frac{\E{\max_{A \in \setA}\left[\fh{A}H + \fl{A}\right]}-\E{\max_{A \in \setA}\fh{A}H}}{\E{\max_{A \in \setA}\left[\fh{A}H + \fl{A}\right]}}, \nonumber \\
& \leq \frac{\E{\max_{A \in \setA}\left[\fh{A} + p\fl{A}\right]}-\E{\max_{A \in \setA}\fh{A}}}{\E{\max_{A \in \setA}\left[\fh{A} + p\fl{A}\right]}} = \text{ELR}\left(\frac{1}{p},p\right),
\end{align}
where the inequality follows since $pH \geq 1$. Hence, 
\beq{app4_eq2}
\sup_{H,L: pH \geq L}\text{ELR}(H,p) = \text{ELR}\left(\frac{1}{p},p\right).
\eeq

Since $S = \max_{A \in \setA}|A|$, if $S = N$, then by the downward closed property, $\setA$ would contain all subsets of $\setN$. The result then follows from Proposition \ref{proposition:elr-mi-sym-thm}. Hence, assume $1 \leq S < N$. Also, without loss of generality, assume that the set with cardinality $S$ in $\setA$ is $B = \{1,2,\dots,S\}$. Define $\setA_1 \triangleq \left\{A: A \subseteq B, ~\text{or}~ A=\{n\} ~\text{for}~ S+1\leq n \leq N \right\}$ and $\setA_2 \triangleq \{A: A\subseteq \setN, |A| \leq S\}$. As $\setA_1 \subseteq \setA \subseteq \setA_2$, we get:
\beq{app4_eq5}
\E{\max_{A \in \setA_1}\fh{A}} \leq \E{\max_{A \in \setA}\fh{A}} \leq \E{\max_{A \in \setA_2}\fh{A}},
\eeq
and
\begin{multline} \label{eq:app4_eq6}
\E{\max_{A \in \setA_1}\left[\fh{A} + p\fl{A}\right]} \leq \E{\max_{A \in \setA}\left[\fh{A} + p\fl{A}\right]} \\ \leq \E{\max_{A \in \setA_2}\left[\fh{A} + p\fl{A}\right]}.
\end{multline}
From \eqref{eq:app4_eq1}, \eqref{eq:app4_eq5}, and \eqref{eq:app4_eq6}, we get:
\begin{multline} \label{eq:app4_eq7}
\frac{\E{\max_{A \in \setA_1}\left[\fh{A} + p\fl{A}\right]}-\E{\max_{A \in \setA_2}\fh{A}}}{\E{\max_{A \in \setA_1}\left[\fh{A} + p\fl{A}\right]}} \leq \text{ELR}\left(\frac{1}{p},p\right)\\
\leq \frac{\E{\max_{A \in \setA_2}\left[\fh{A} + p\fl{A}\right]}-\E{\max_{A \in \setA_1}\fh{A}}}{\E{\max_{A \in \setA_2}\left[\fh{A} + p\fl{A}\right]}}.
\end{multline}
Define $Y \triangleq \sum_{n=1}^N \indicator{X_n = H}$ and $\phi(S,\mb{X}) \triangleq \max_{n:S+1 \leq n \leq N} X_n$. Then, $Y \sim$ Binomial$(N,p)$. Since we are interested in $\lim_{p \rightarrow 0}\text{ELR}(1/p,p)$, we can take $pS < 1$. With these we get: 
\begin{align} \label{eq:app4_eq8}
& \E{\max_{A \in \setA_1}\fh{A}} = \E{\fh{B}+\indicator{\fh{B}=0,~\phi(S,\mb{X}) = H}}, \nonumber \\
& = S p + (1-p)^{S}\left[1-(1-p)^{N-S}\right] = S p + (1-p)^{S} -(1-p)^N,
\end{align}
and,
\beq{app4_eq9}
\E{\max_{A \in \setA_2}\fh{A}} = \E{Y\wedge S}.
\eeq
Similarly, 
\begin{align} \label{eq:app4_eq10}
& \E{\max_{A \in \setA_1}\left[\fh{A} + p\fl{A}\right]} \nonumber \\ 
& = \E{\left(\fh{B}+p\fl{B}\right)\indicator{\fh{B}\geq1}+\indicator{\fh{B}=0,~\phi(S,\mb{X})=H}+S p \indicator{\fh{B}=0,~\phi(S,\mb{X}) = L}}, \nonumber \\
& = \E{\fh{B}+p\fl{B}+(1-S p)\indicator{\fh{B}=0,~\phi(S,\mb{X})=H}}, \nonumber \\
& = pS + p(1-p)S + (1-p)^{S}\left[1-(1-p)^{N-S}\right](1-S p), \nonumber\\
& = pS(2-p) + \left[(1-p)^{S}-(1-p)^N\right](1-S p),
\end{align}
where the first equality uses the fact that $pS < 1$. Hence, when $X_n = L,~\forall n \in B$, but $X_n = H$ for some $S+1 \leq n \leq N$, then $B$ cannot be an efficient allocation. Also,
\beq{app4_eq11}
\E{\max_{A \in \setA_2}\left[\fh{A} + p\fl{A}\right]} = \E{S\wedge Y + p(S - S \wedge Y)} = pS + (1-p)\E{S \wedge Y}.
\eeq
Using \eqref{eq:app4_eq7}, \eqref{eq:app4_eq9}, and \eqref{eq:app4_eq10}, we get:
\begin{align} \label{app4_eq11}
\text{ELR}\left(\frac{1}{p},p\right) &\geq \frac{pS(2-p) + \left[(1-p)^{S}-(1-p)^N\right](1-S p) - \E{Y\wedge S}}{pS(2-p) + \left[(1-p)^{S}-(1-p)^N\right](1-S p)} \nonumber \\
\Rightarrow \liminf_{p \rightarrow 0} \text{ELR}\left(\frac{1}{p},p\right) &\geq \frac{2S + N -S - N}{2S + N -S} = \frac{S}{S + N},
\end{align}
where limit in the right side is obtained by dividing both numerator and denominator by $p$, and noticing that $\lim_{p\rightarrow 0}\E{S \wedge Y}/p = N$.
Similarly, using \eqref{eq:app4_eq7}, \eqref{eq:app4_eq8}, and \eqref{eq:app4_eq11}, we get:
\begin{align} \label{app4_eq12}
& \text{ELR}\left(\frac{1}{p},p\right) \leq \frac{pS + (1-p)\E{S \wedge Y}-\left(S p + (1-p)^{S} -(1-p)^N\right)}{pS + (1-p)\E{S \wedge Y}}, \nonumber \\
& \Rightarrow \limsup_{p\rightarrow 0}\text{ELR}\left(\frac{1}{p},p\right) \leq \frac{S}{S + N},
\end{align}
where limit in the right side is obtained by dividing both numerator and denominator by $p$, and noticing that $ \lim_{p\rightarrow 0}\E{S \wedge Y}/p = N$.

Finally, from \eqref{app4_eq11} and \eqref{app4_eq11}, we get:
\beqn
\lim_{p\rightarrow 0}\sup_{H,L: pH \geq L}\text{ELR}(H,p) = \lim_{p\rightarrow 0}\text{ELR}\left(\frac{1}{p},p\right) = \frac{S}{S +N},
\eeqn
which completes the proof.

\section{Proof of Proposition \ref{proposition:elr-si-main-thm}} \label{sec:appendix4}
We start with the following lemmas that help us reduce the search space over $\mb{x}$ and $\mb{p}$ for solving the optimization problem defined by \eqref{eq:elr-si-opt-obj} and \eqref{eq:elr-si-opt-cons}.

\begin{lemma}
\label{lemma:elr-si-lemma1}
Given any (\mb{x},\mb{p}) satisfying the constraints given by \eqref{eq:elr-si-opt-cons}. If $\thld< K$ then there exists vectors $(\widehat{\mb{x}},\widehat{\mb{p}})$ of length $\thld$, satisfying the constraints given by \eqref{eq:elr-si-opt-cons}, such that $\elrh{N} > \elr{N}$.
\end{lemma}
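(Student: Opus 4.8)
Write $t=\thld$ for brevity. The plan is to build $(\wh{\mb{x}},\wh{\mb{p}})$ by \emph{collapsing} the top of the support --- the values $x^{t},x^{t+1},\ldots,x^K$ at or above the reserve index --- into the single value $x^{t}$, leaving the part strictly below the reserve index untouched: set $\wh{K}=t$, put $\wh{x}^i=x^i$ and $\wh{p}^i=p^i$ for $1\le i\le t-1$, and put $\wh{x}^{t}=x^{t}$ with $\wh{p}^{t}=\sum_{j=t}^K p^j$. This meets the constraints of \eqref{eq:elr-si-opt-cons}: each $\wh{p}^i>0$ (the last because every $p^j>0$) and they sum to one, $0<\wh{x}^1<\cdots<\wh{x}^{t}$, and $\wh{x}^{t}/\wh{x}^1=x^{t}/x^1\le x^K/x^1\le r$.

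The crux --- and the step I expect to be the main, if still mild, obstacle --- is to check that the reserve index of the new configuration is again $t$, so that the ELR formula \eqref{eq:elr-si} legitimately applies to $(\wh{\mb{x}},\wh{\mb{p}})$. Writing $g(k)=x^k\sum_{j=k}^K p^j$ and $\wh{g}(k)=\wh{x}^k\sum_{j=k}^{t}\wh{p}^j$, a short computation gives $\wh{g}(k)=g(k)$ for every $1\le k\le t$: for $k<t$ because $\sum_{j=k}^{t-1}p^j+\wh{p}^{t}=\sum_{j=k}^K p^j$, and for $k=t$ by definition of $\wh{p}^t$. Since $g(t)=\max_{1\le k\le K}g(k)\ge\max_{1\le k\le t}g(k)\ge g(t)$, the index $t$ attains $\max_k\wh{g}(k)$ and is the largest index doing so (a tie among smaller indices is harmless, $t$ being itself a maximizer), so $t(\wh{\mb{x}},\wh{\mb{p}})=t$. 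Along the way one records the bookkeeping identities: for $1\le i\le t-1$ the quantity $\wh{z}^i$ from \eqref{eq:max-prob} depends only on the unchanged $p^1,\ldots,p^i$, so $\wh{z}^i=z^i$, while $\wh{z}^{t}=1-\big(\sum_{j=1}^{t-1}p^j\big)^N=\sum_{i=t}^K z^i$.

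It then remains to compare the two ratios. By the above, $\elrh{N}$ and $\elr{N}$ have the same numerator $\sum_{i=1}^{t-1}z^ix^i$, while their denominators differ only in the tail --- $x^{t}\sum_{i=t}^K z^i$ for the new configuration versus $\sum_{i=t}^K z^ix^i$ for the old one. Since $x^i\ge x^{t}$ for $i\ge t$, strictly so on the nonempty range $t<i\le K$ (nonempty because $t<K$), while $z^i>0$ throughout (each $p^i>0$), we get $\sum_{i=t}^K z^ix^i>x^{t}\sum_{i=t}^K z^i$. Hence the new denominator is strictly smaller with the numerator unchanged, giving $\elrh{N}>\elr{N}$ whenever that numerator is positive, i.e. whenever $t\ge2$; the degenerate case $t=1$ gives $\elr{N}=0$ and the claimed strict inequality is then read vacuously (such a configuration is irrelevant to the supremum this lemma feeds into).
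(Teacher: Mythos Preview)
Your proof is correct and follows essentially the same construction as the paper: collapse all mass at or above the reserve index onto $x^{t}$, verify the reserve index is preserved, and compare the denominators while the numerator stays fixed. Your verification that $t(\wh{\mb{x}},\wh{\mb{p}})=t$ via $\wh g(k)=g(k)$ for $k\le t$ is the same argument the paper gives, just written out a bit more explicitly; you also record the identity $\wh z^{t}=\sum_{i=t}^K z^i$, which the paper uses implicitly. One small remark: your treatment of the degenerate case $t=1$ is a welcome addition (the paper does not mention it), though ``read vacuously'' is not quite the right phrase---the strict inequality genuinely fails there, but as you correctly note, $\mathrm{ELR}=0$ in that case so it contributes nothing to the supremum the lemma is used for.
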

\begin{proof}
Let $\thld< K$. Construct $(\widehat{\mb{x}},\widehat{\mb{p}})$ of length $\thld$ by removing $\{x^i : i>\thld\}$ from $\mb{x}$ and by assigning the probability of the removed $x^i$'s to $x^{\thld}$. Let $\widehat{\mb{z}}$ be obtained from $\widehat{\mb{p}}$ by \eqref{eq:max-prob}. Then, 
\beq{elr-si-lemma1-eq1}
(\widehat{x}^i,\widehat{p}^i,\widehat{z}^i) = \left\{ 
\begin{array}{l l}
  (x^i,p^i,z^i) & \quad \text{if $i < \thld$,}\\
  \left(x^{\thld},\sum_{i=\thld}^K p^i,1 - \left(\sum_{i=1}^{\thld-1}p^i\right)^N\right) & \quad \text{if $i = \thld$.}
\end{array} \right.
\eeq
Using \eqref{eq:reserve-price-idx}, for any $i < \thld$, we have $x^{\thld}(\sum_{j=\thld}^K p^j) \geq x^{i}(\sum_{j=i}^K p^j)$. Hence, $t(\widehat{\mb{x}},\widehat{\mb{p}}) = \thld$ implying that the reserve price for $(\widehat{\mb{x}},\widehat{\mb{p}})$ is same as the original reserve price. Thus, 
\beqn
\elrh{N} = \frac{\sum_{i=1}^{\thld-1} z^i x^i}{\sum_{i=1}^{\thld-1} z^ix^i + \widehat{z}^{\thld}x^{\thld}}. 
\eeqn
The constraints given by \eqref{eq:elr-si-opt-cons}, together with \eqref{eq:max-prob} and \eqref{eq:elr-si-lemma1-eq1}, imply: 
\beqn
\sum_{i=\thld}^K z^ix^i > \bigg(\sum_{i=\thld}^K z^i\bigg)x^{\thld} = \bigg(1 - \bigg(\sum_{i=1}^{\thld-1}p^i\bigg)^N\bigg)x^{\thld} = \widehat{z}^{\thld}\widehat{x}^{\thld}.
\eeqn
Hence $\elrh{N} > \elr{N}$, and the proof is complete.
\end{proof}

We later establish that the worst case ELR obtained by confining only to those $(\mb{x},\mb{p})$ such that, $\thld = K$, is nondecreasing in $K$ for a fixed $r$ and $N$. This, along with Lemma \ref{lemma:elr-si-lemma1}, imply that $\thld = K$ can be imposed as an additional constraint in the optimization problem \eqref{eq:elr-si-opt-obj}-\eqref{eq:elr-si-opt-cons} with no loss of optimality.

\begin{lemma}
\label{lemma:elr-si-lemma2}
Let $z^i$'s be given by \eqref{eq:max-prob} and $\gamma(r,K,N)$ be the value of the optimization problem given below.
\beq{elr-si-opt2}
\maximize_{\mb{p}} ~\frac{p^K}{z^K}\left(\sum_{i=1}^{K-1}\left(\frac{z^i}{\sum_{j=i}^K p^j}\right)\right),
\eeq
\beq{elr-si-constraint}
\text{subject to:} ~ rp^K \geq 1, ~ \sum_{i=1}^K p^i = 1, ~~ p^i > 0, ~ 1\leq i\leq K. 
\eeq
Then the worst case ELR, $\eta(r,K,N)$, defined by \eqref{eq:elr-si-opt-obj}-\eqref{eq:elr-si-opt-cons}, is equal to $\gamma(r,K,N)/(1+\gamma(r,K,N))$.
\end{lemma}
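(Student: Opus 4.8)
The plan is to eliminate the valuation vector $\mb{x}$ from the optimization \eqref{eq:elr-si-opt-obj}--\eqref{eq:elr-si-opt-cons}, exploiting that $\elr{N}$ has the simple form $A/(A+B)$ and that, once $\mb{p}$ is fixed, the best choice of $\mb{x}$ is forced by the reserve-price condition. First I would use Lemma \ref{lemma:elr-si-lemma1}, together with the (subsequently established) fact that the $\thld=K$-restricted worst case is nondecreasing in $K$, to reduce to configurations with $\thld=K$. For such $(\mb{x},\mb{p})$, \eqref{eq:elr-si} reads $\elr{N}=A/(A+B)$ with $A=\sum_{i=1}^{K-1}z^ix^i$ and $B=z^Kx^K$; since $z^K=1-(1-p^K)^N>0$ we have $B>0$, and because $t\mapsto t/(1+t)$ is strictly increasing on $[0,\infty)$, maximizing $\elr{N}$ is equivalent to maximizing $\rho\triangleq A/B=\sum_{i=1}^{K-1}(z^i/z^K)(x^i/x^K)$. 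Hence $\eta(r,K,N)=\bar\rho/(1+\bar\rho)$, where $\bar\rho$ is the supremum of $\rho$ over the feasible set, and it remains only to identify $\bar\rho$ with $\gamma(r,K,N)$.

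Next I would fix $\mb{p}$ and optimize over $\mb{x}$. Reading off \eqref{eq:reserve-price-idx}, the condition $\thld=K$ is equivalent to $K\in\argmax_{1\le k\le K}x^k\sum_{j=k}^Kp^j$ (it is then automatically the largest maximizing index), i.e.\ to $x^k\sum_{j=k}^Kp^j\le x^Kp^K$, equivalently $x^k/x^K\le p^K/\sum_{j=k}^Kp^j$, for every $k<K$. Because each coefficient $z^i/z^K$ in $\rho$ is positive, $\rho$ is maximized by saturating all of these bounds, i.e.\ by taking $x^i/x^K=p^K/\sum_{j=i}^Kp^j$ for $1\le i\le K-1$; this gives $\rho=\tfrac{p^K}{z^K}\sum_{i=1}^{K-1}\tfrac{z^i}{\sum_{j=i}^Kp^j}$, which is exactly the objective of \eqref{eq:elr-si-opt2}. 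I would then check that this choice is genuinely admissible: the numbers $p^K/\sum_{j=i}^Kp^j$ are strictly increasing in $i$ (all $p^j>0$) and the one at $i=K-1$ equals $p^K/(p^{K-1}+p^K)<1$, so $0<x^1<\cdots<x^{K-1}<x^K$; and since $x^1/x^K=p^K$, the remaining constraint $x^K\le rx^1$ holds precisely when $rp^K\ge1$. Conversely, comparing $k=1$ with $k=K$ in the displayed inequality shows that any feasible $(\mb{x},\mb{p})$ with $\thld=K$ has $x^1\le x^Kp^K$, which together with $x^1\ge x^K/r$ forces $rp^K\ge1$. Thus the $\mb{p}$-coordinates of feasible configurations with $\thld=K$ are exactly the $\mb{p}$ satisfying \eqref{eq:elr-si-constraint}, and on each such slice $\sup_{\mb{x}}\rho$ equals the \eqref{eq:elr-si-opt2} objective; taking the supremum over $\mb{p}$ gives $\bar\rho=\gamma(r,K,N)$ and hence $\eta(r,K,N)=\gamma(r,K,N)/(1+\gamma(r,K,N))$.

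The main obstacle is the second step: converting the combinatorial reserve-price rule \eqref{eq:reserve-price-idx} into the linear inequalities $x^k\sum_{j=k}^Kp^j\le x^Kp^K$, and observing that the resulting pointwise-optimal ratios $p^K/\sum_{j=i}^Kp^j$ happen to be \emph{monotone} in $i$ --- which is what makes them simultaneously attainable by a strictly increasing $\mb{x}$, and also what pins the admissible $\mb{p}$'s down to exactly $\{rp^K\ge1\}$. A secondary point to secure is the reduction in the first step: one needs the $\thld=K$-restricted worst case to be nondecreasing in $K$, so that shortening via Lemma \ref{lemma:elr-si-lemma1} cannot escape the bound; this is the easy observation that $\gamma(r,K,N)$ is nondecreasing in $K$, obtained by embedding an optimal length-$K$ probability vector into length $K+1$ (splitting $p^1$ as $(\varepsilon,p^1-\varepsilon)$) and letting $\varepsilon\to0$.
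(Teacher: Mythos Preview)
Your proposal is correct and follows essentially the same route as the paper: reduce to $\thld=K$ via Lemma~\ref{lemma:elr-si-lemma1} and the monotonicity in $K$ (the paper's Lemma~\ref{lemma:elr-si-lemma3}), rewrite the reserve-price condition \eqref{eq:reserve-price-idx} as $x^i\le p^Kx^K/\sum_{j=i}^Kp^j$, observe that maximizing $\elr{N}$ amounts to maximizing $A/B=\sum_{i<K}z^ix^i/(z^Kx^K)$, and then saturate in $\mb{x}$; the paper does the same after normalizing $x^K=r$. Your explicit check that the saturated ratios $p^K/\sum_{j=i}^Kp^j$ are strictly increasing in $i$---so that the optimal $\mb{x}$ is genuinely admissible---is a detail the paper leaves implicit.
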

\begin{proof}
From \eqref{eq:reserve-price-idx}, we have the following equivalence: 
\beq{constraint-thld-k}
\thld = K ~\Leftrightarrow~ x^i \leq \frac{p^K x^K}{\sum_{j=i}^K p^j},~~ 1\leq i \leq K-1.
\eeq
Since the ELR is invariant to scaling of $\mb{x}$, we can fix $x^K = r$. The constraint $x^K/x^1 \leq r$ then reduces to $x^1 \geq 1$, and hence we must have $rp^K \geq 1$. Fixing $\mb{p}$ (and hence fixing $\mb{z}$), and setting $x^K = r$, we see from \eqref{eq:elr-si} that $\elr{N}$ is increasing in each $x^i, ~1 \leq i \leq K-1$. Hence, set $x^i = (rp^K)/(\sum_{j=i}^K p^j)$. Also, maximizing $\elr{N}$ is equivalent to maximizing $(\sum_{i=1}^{K-1}z^i x^i)/(z^K x^K)$. The proof easily follows from these observations.
\end{proof}

\begin{lemma}
\label{lemma:elr-si-lemma3}
$\gamma(r,K,N)$ defined in Lemma \ref{lemma:elr-si-lemma2} is nondecreasing in $K$ for fixed $r$ and $N$.
\end{lemma}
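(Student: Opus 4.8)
The plan is to prove the one-step inequality $\gamma(r,K+1,N)\ge\gamma(r,K,N)$; since $r$ and $N$ are held fixed, this gives monotonicity in $K$. The idea is to take an arbitrary $\mb{p}=(p^1,\dots,p^K)$ feasible for the problem defining $\gamma(r,K,N)$ in Lemma~\ref{lemma:elr-si-lemma2} (so $\sum_i p^i=1$, $p^i>0$, $rp^K\ge1$) and to build from it a feasible $\mb{p}'=(p'^1,\dots,p'^{K+1})$ for the problem defining $\gamma(r,K+1,N)$ whose objective value is at least that of $\mb{p}$. Taking the supremum over all such $\mb{p}$ then yields the claim (this works whether or not the supremum is attained, since the bound $\mathrm{obj}(\mb{p}')\ge\mathrm{obj}(\mb{p})$ holds for every feasible $\mb{p}$).

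For the construction I would \emph{split the smallest atom}. Fix any $\epsilon\in(0,1)$ and set $p'^1=\epsilon p^1$, $p'^2=(1-\epsilon)p^1$, and $p'^{j}=p^{j-1}$ for $3\le j\le K+1$. This is a valid probability vector with strictly positive entries, and — crucially — the top atom is untouched, $p'^{K+1}=p^K$, so the constraint $rp'^{K+1}\ge1$ is inherited for free; this is precisely why the bottom atom, rather than an interior one, should be split. The bookkeeping is then clean: writing $S_m\triangleq\sum_{j\le m}p^j$, the new partial sums satisfy $\sum_{j\le1}p'^j=\epsilon S_1$ and $\sum_{j\le m}p'^j=S_{m-1}$ for every $m\ge2$, so the induced quantities $z'^j$ (defined by \eqref{eq:max-prob} applied to $\mb{p}'$) become $z'^1=\epsilon^N z^1$, $z'^2=(1-\epsilon^N)z^1$, and $z'^{j}=z^{j-1}$ for $3\le j\le K+1$, while each tail sum $\sum_{j\ge m}p'^j$ with $m\ge3$ equals the old $\sum_{j\ge m-1}p^j$. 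In other words, all the "middle" terms $z^i/\sum_{j\ge i}p^j$ merely shift index without changing value, and only the $i=1$ term is modified.

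Substituting into the objective of Lemma~\ref{lemma:elr-si-lemma2} and using $\sum_{j\ge1}p^j=\sum_{j\ge1}p'^j=1$, a short calculation gives
\[
\mathrm{obj}(\mb{p}')-\mathrm{obj}(\mb{p})
=\frac{p^K}{z^K}\Bigl[\epsilon^N z^1+\frac{(1-\epsilon^N)z^1}{1-\epsilon p^1}-z^1\Bigr]
=\frac{p^K\,z^1\,(1-\epsilon^N)}{z^K}\cdot\frac{\epsilon p^1}{1-\epsilon p^1},
\]
which is nonnegative because $z^1=(p^1)^N>0$, $z^K>0$, $0<\epsilon<1$, and $0<\epsilon p^1<1$. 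Hence $\mathrm{obj}(\mb{p}')\ge\mathrm{obj}(\mb{p})$, and taking the supremum over feasible $\mb{p}$ gives $\gamma(r,K+1,N)\ge\gamma(r,K,N)$, completing the proof.

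There is no real obstacle here beyond identifying the right perturbation: splitting the smallest atom makes the $rp^{K+1}\ge1$ constraint automatic and collapses the key inequality to the triviality $1/(1-\epsilon p^1)\ge1$. I expect the only nontrivial part of the write-up to be the (routine) verification of how the $z^i$'s and the tail sums $\sum_{j\ge i}p^j$ transform under the split, i.e.\ checking the index shifts stated above.
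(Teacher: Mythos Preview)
Your proof is correct and follows essentially the same approach as the paper: both split the bottom atom $p^1$ into two pieces so that the top atom (and hence the constraint $rp^{K+1}\ge1$) is preserved, and all middle terms of the objective merely shift index. The paper uses an additive split $(\epsilon,\,p^1-\epsilon)$ and then lets $\epsilon\downarrow0$ to recover the $K$-dimensional objective in the limit, whereas your multiplicative split and explicit computation yield the direct inequality $\mathrm{obj}(\mb{p}')\ge\mathrm{obj}(\mb{p})$ without any limiting step, which is a small but clean refinement.
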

\begin{proof}
Let $\mb{p}$ be any vector of dimension $K$ satisfying the constraints given by \eqref{eq:elr-si-constraint}. Construct $\widehat{\mb{p}}$ of dimension $K+1$ as $\widehat{\mb{p}} = (\epsilon, p^1-\epsilon, p^2, \ldots, p^K)$, where $0 < \epsilon < p^1$. Clearly, $\widehat{\mb{p}}$ satisfies the $K+1$ dimension version of the constraints given by \eqref{eq:elr-si-constraint}. Let $\widehat{\mb{z}}$ be obtained from $\widehat{\mb{p}}$ using \eqref{eq:max-prob}. Thus,
\beqn
\frac{\widehat{p}^{K+1}}{\widehat{z}^{K+1}}\left(\sum_{i=1}^{K}\left(\frac{\widehat{z}^i}{\sum_{j=i}^{K+1} \widehat{p}^j}\right)\right) \leq \gamma(r,K+1,N).
\eeqn
The above inequality holds for all $\epsilon \in (0 , p^1)$ and all $\mb{p}$ satisfying the constraints given by \eqref{eq:elr-si-constraint}. The left side of it can be made equal to $\gamma(r,K,N)$ by letting $\epsilon \downarrow 0$ followed by taking supremum over all $\mb{p}$ satisfying \eqref{eq:elr-si-constraint}. This proves $\gamma(r,K,N) \leq \gamma(r,K+1,N)$.
\end{proof}

\begin{lemma}
\label{lemma:elr-si-lemma4}
The constraint $rp^K \geq 1$ for the optimization problem defined in Lemma \ref{lemma:elr-si-lemma2} can be made tight.
\end{lemma}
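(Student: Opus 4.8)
The plan is to show that the objective maximized in Lemma~\ref{lemma:elr-si-lemma2} is strictly decreasing in $p^K$ once the \emph{proportions} among $p^1,\dots,p^{K-1}$ are held fixed; since $p^K$ enters the feasible set only through the single inequality $p^K\ge 1/r$, this forces the optimum to $p^K=1/r$. Concretely, I would write a feasible $\mb p$ as $t\triangleq p^K\in[1/r,1)$ together with the conditional probabilities $q^i\triangleq p^i/(1-t)$, $1\le i\le K-1$, which form an arbitrary strictly positive probability vector of dimension $K-1$. With $Q_i\triangleq\sum_{j=1}^i q^j$ (so $Q_0=0$, $Q_{K-1}=1$) and $\zeta^i\triangleq Q_i^N-Q_{i-1}^N$, one has $\sum_{j=i}^K p^j=1-(1-t)Q_{i-1}$, and from \eqref{eq:max-prob} that $z^i=(1-t)^N\zeta^i$ for $i\le K-1$ while $z^K=1-(1-t)^N$. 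Substituting into the objective gives
\[
\frac{p^K}{z^K}\sum_{i=1}^{K-1}\frac{z^i}{\sum_{j=i}^K p^j}\;=\;g(t)\,C(t,\mb q),\qquad
g(t)\triangleq\frac{t(1-t)^N}{1-(1-t)^N},\quad
C(t,\mb q)\triangleq\sum_{i=1}^{K-1}\frac{\zeta^i}{1-(1-t)Q_{i-1}} .
\]

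I would then verify that, on $(0,1)$, both factors are positive and non-increasing in $t$ for fixed $\mb q$. For $C(\cdot,\mb q)$ this is immediate, since each $\zeta^i>0$ and each denominator $1-(1-t)Q_{i-1}$ is positive (as $Q_{i-1}<1$) and non-decreasing in $t$ (as $Q_{i-1}\ge 0$). For $g$, the one nonroutine step is the substitution $u\triangleq 1-t$ together with the factorization $1-u^N=(1-u)\sum_{k=0}^{N-1}u^k$, which collapses $g$ to
\[
g(t)=\frac{u^N}{\sum_{k=0}^{N-1}u^k}=\Bigl(\sum_{m=1}^{N}u^{-m}\Bigr)^{-1},
\]
evidently increasing in $u$ and hence strictly decreasing in $t$. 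I regard choosing this parametrization and spotting this factorization as essentially the only content of the argument; everything after it is bookkeeping, and differentiating $g$ directly would work too, only less cleanly.

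Finally, since $g$ is strictly decreasing, $C(\cdot,\mb q)$ is non-increasing, and both are positive, the product $t\mapsto g(t)C(t,\mb q)$ is strictly decreasing on $(0,1)$ for every fixed $\mb q$. Hence, given any feasible $\mb p$ with $p^K=t$, the vector $\widehat{\mb p}$ with $\widehat{p}^K=1/r$ and $\widehat{p}^i=(1-1/r)\,q^i$ for $i\le K-1$ is again feasible (positive entries summing to $1$, and $r\widehat{p}^K=1$) and has objective value $g(1/r)\,C(1/r,\mb q)\ge g(t)\,C(t,\mb q)$. Since $\mb q$ ranges over all positive probability vectors of dimension $K-1$, this exhausts the feasible set, so the supremum defining $\gamma(r,K,N)$ is unchanged upon adjoining the constraint $rp^K=1$; that is, the constraint $rp^K\ge 1$ may be taken tight, as claimed. (Together with Lemmas~\ref{lemma:elr-si-lemma1}--\ref{lemma:elr-si-lemma3}, this is exactly what is needed to reach Proposition~\ref{proposition:elr-si-main-thm}.)
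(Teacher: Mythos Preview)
Your proof is correct and follows essentially the same route as the paper's: both fix the proportions among $p^1,\dots,p^{K-1}$ and slide $p^K$ downward, showing the objective increases. The paper parametrizes this deformation by $\epsilon$ via $\widehat p^i=(1+\epsilon)p^i$ and $\widehat p^K=p^K-\epsilon(1-p^K)$, then shows $f(\widehat{\mb p})$ is increasing in $\epsilon$ by exactly the same factorization you use for $g$ (writing $\widehat z^K=\widehat p^K\sum_{k=0}^{N-1}(1-\widehat p^K)^k$); your $(t,\mb q)$ parametrization and the explicit split into $g(t)\,C(t,\mb q)$ make the monotonicity of each factor more transparent, but the content is the same.
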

\begin{proof}
Consider any $\mb{p}$ satisfying \eqref{eq:elr-si-constraint} with $rp^K > 1$. Define $\widehat{\mb{p}}$ as $\widehat{p}^i = (1+\epsilon)p^i, ~ 1\leq i\leq K-1$, and $\widehat{p}^K = p^K - \epsilon(1-p^K)$, where $\epsilon > 0$ is such that $p^K - \epsilon(1-p^K) \geq 1/r$. Clearly, $\widehat{\mb{p}}$ satisfies \eqref{eq:elr-si-constraint}. Let $\mb{z}$ and $\widehat{\mb{z}}$ be obtained from $\mb{p}$ and $\widehat{\mb{p}}$ respectively using \eqref{eq:max-prob}. For $1\leq i \leq K-1$, $\widehat{z}^i = (1+\epsilon)^K z^i$. 

Define the function $f(\mb{p})$ as: 
\beqn
f(\mb{p}) \triangleq \frac{p^K}{z^K}\left(\sum_{i=1}^{K-1}\left(\frac{z^i}{\sum_{j=i}^K p^j}\right)\right).
\eeqn
Then, 
\begin{align*}
f(\widehat{\mb{p}}) & = \frac{\widehat{p}^K}{\widehat{z}^K}\left(\sum_{i=1}^{K-1}\left(\frac{\widehat{z}^i}{\sum_{j=i}^K \widehat{p}^j}\right)\right), \\
& = \frac{(1+\epsilon)^N\widehat{p}^K}{\widehat{z}^K}\left(\sum_{i=1}^{K-1}\left(\frac{z^i}{\sum_{j=i}^{K-1} (1+\epsilon)p^j + p^K - \epsilon(1-p^K)}\right)\right), \\
& = \frac{(1+\epsilon)^N\widehat{p}^K}{\widehat{z}^K}\left(\sum_{i=1}^{K-1}\left(\frac{z^i}{\sum_{j=i}^{K}p^j - \epsilon(1-\sum_{j=i}^{K}p^j)}\right)\right).
\end{align*}

Also,
\begin{align*}
\frac{(1+\epsilon)^N\widehat{p}^K}{\widehat{z}^K} & = \frac{(1+\epsilon)^N\widehat{p}^K}{1-(1-\widehat{p}^K)^N} = \frac{(1+\epsilon)^N}{1+(1-\widehat{p}^K)+(1-\widehat{p}^K)^2+\ldots+(1-\widehat{p}^K)^{N-1}}, \\
& = \frac{(1+\epsilon)^N}{1+(1+\epsilon)(1-p^K)+(1+\epsilon)^2(1-p^K)^2+\ldots+(1+\epsilon)^{N-1}(1-p^K)^{N-1}}, \\
& = \frac{1}{\frac{1}{(1+\epsilon)^N}+\frac{(1-p^K)}{(1+\epsilon)^{N-1}}+\frac{(1-p^K)^2}{(1+\epsilon)^{N-2}}+\ldots+\frac{(1-p^K)^{N-1}}{(1+\epsilon)}},
\end{align*}
which is an increasing function of $\epsilon$. Since $1-\sum_{j=i}^{K}p^j \geq 0$, $f(\widehat{\mb{p}})$ is also an increasing function of $\epsilon$ and hence $f(\widehat{\mb{p}}) > f(\mb{p})$. Thus, we can set $\epsilon$ to the maximum possible value at which $p^K - \epsilon(1-p^K) = 1/r$. This completes the proof.
\end{proof}

Proposition \ref{proposition:elr-si-main-thm} follows easily from Lemmas \ref{lemma:elr-si-lemma1}-\ref{lemma:elr-si-lemma4}.

\section{Proof of Proposition \ref{proposition:elr-si-thm-one-buyer}} \label{sec:appendix5}
For $N=1$, the objective function in the optimization problem defined in Proposition \ref{proposition:elr-si-main-thm} simplifies to $ \sum_{i=1}^{K-1}\big(p^i/(\sum_{j=i}^K p^j)\big)$ with $p^K = 1/r$. Construct the Lagrangian $L(\mb{p},\lambda)$ as:
\beqn
L(\mb{p},\lambda) = \sum_{i=1}^{K-1}\left(\frac{p^i}{\sum_{j=i}^K p^j}\right) - \lambda\left(\sum_{i=1}^K p^i -1 \right),
\eeqn
where the constraints $p^i > 0$ is ignored for a while. We will later verify that $p^i$'s obtained this way indeed satisfy $p^i > 0$. For $1 \leq i \leq K-1$, we have:
\beq{app1_eq2}
\pd{L(\mb{p},\lambda)}{p^i} = -\sum_{j=1}^{i-1} \frac{p^j}{\left(\sum_{l=j}^K p^l\right)^2} + \frac{1}{\sum_{l=i}^K p^l} - \frac{p^i}{\left(\sum_{l=i}^K p^l\right)^2} - \lambda.
\eeq
Set $\displaystyle \pd{L(\mb{p},\lambda)}{p^i} = 0$ for all $1 \leq i \leq K-1$, and $\displaystyle \pd{L(\mb{p},\lambda)}{\lambda} = 0$. The latter implies $\sum_{i=1}^K p^i = 1$, and~\eqref{eq:app1_eq2} simplifies to:
\beq{p_rec}
-\sum_{j=1}^{i-1} \frac{p^j}{\left(1-\sum_{l=1}^{j-1} p^l\right)^2} + \frac{1-\sum_{l=1}^i p^l}{\left(1-\sum_{l=1}^{i-1} p^l\right)^2} = \lambda.
\eeq
For $i=1$, this gives $p^1 = 1-\lambda$. For $i=2$, we get $-p^1+(1-p^1-p^2)/(1-p^1)^2 = \lambda$, implying $p^2 = \lambda(1-\lambda)$. Using induction argument, we show that $p^i = \lambda^{i-1}(1-\lambda)$ for $1\leq i \leq K-1$. Assume that $p^j = \lambda^{j-1}(1-\lambda)$ for $1\leq j \leq i-1$.Then, $1-\sum_{l=1}^j p^l = 1-(1-\lambda)-\lambda(1-\lambda)-\ldots-\lambda^{j-1}(1-\lambda) = \lambda^j$. From \eqref{eq:p_rec}, 
\begin{align*}
& - \sum_{j=1}^{i-1}\left(\frac{\lambda^{j-1}(1-\lambda)}{\lambda^{2j-2}}\right) + \frac{\lambda^{i-1}-p^i}{\lambda^{2i-2}} = \lambda, \\
& \Leftrightarrow -(1-\lambda)\sum_{j=1}^{i-1}\left(\frac{1}{\lambda^{j-1}}\right) + \frac{\lambda^{i-1}-p^i}{\lambda^{2i-2}} = \lambda, \\
& \Leftrightarrow p^i = \lambda^{i-1}(1-\lambda),
\end{align*}
and hence, proving the induction claim. Since $\sum_{i=1}^{K-1}p^i = 1 - 1/r \Leftrightarrow 1/r = 1-\sum_{i=1}^{K-1}p^i = \lambda^{K-1}$. This gives $\displaystyle \lambda = r^{\frac{-1}{K-1}}$. Thus, the optimum value of the optimization problem defined in Proposition \ref{proposition:elr-si-main-thm} is:
\beqn
\sum_{i=1}^{K-1}\left(\frac{p^i}{\sum_{j=i}^K p^j}\right) = \sum_{i=1}^{K-1}\left(\frac{p^i}{1-\sum_{j=1}^{i-1} p^j}\right) = \sum_{i=1}^{K-1}\left(\frac{\lambda^{i-1}(1-\lambda)}{\lambda^{i-1}}\right) = (K-1)(1-\lambda).
\eeqn
Hence, $\displaystyle \gamma^{*}(r,K,1) = (K-1)\left(1-r^{\frac{-1}{K-1}}\right)$, and the proof is complete.

\section{Proof of Proposition \ref{proposition:elr-si-thm-many-buyers}} \label{sec:appendix6}
Let $\theta_i \triangleq \sum_{j=1}^i p^j$. Then $z^i = \theta_i^N - \theta_{i-1}^N$, and 
\begin{align} \label{eq:appb_eq1}
& \frac{z^i}{\sum_{j=i}^K p^j} = \frac{\theta_i^N - \theta_{i-1}^N}{1-\sum_{j=1}^{i-1} p^j} = \int_{\theta_{i-1}}^{\theta_i}\left(\frac{N\theta^{N-1}}{1-\theta_{i-1}}\right)d\theta \leq \int_{\theta_{i-1}}^{\theta_i}\left(\frac{N\theta^{N-1}}{1-\theta}\right)d\theta, \nonumber \\
& \Rightarrow \sum_{i=1}^{K-1}\left(\frac{z^i}{\sum_{j=i}^K p^j}\right) \leq \int_{0}^{1-\frac{1}{r}}\left(\frac{N\theta^{N-1}}{1-\theta}\right)d\theta.
\end{align}
For any $a \in (0,1)$,
\begin{align} \label{eq:appb_eq2}
\int_{0}^{a} \left(\frac{N\theta^{N-1}}{1-\theta}\right)d\theta & = N \int_{0}^{a}\left(-\left(1+\theta+\ldots+\theta^{N-2}\right) + \frac{1}{1-\theta} \right)dq, \nonumber \\
& = -N\left(\ln(1-a)+\sum_{i=1}^{N-1}\frac{a^i}{i} \right) = N\sum_{i=N}^{\infty}\frac{a^i}{i}.
\end{align}
From \eqref{eq:appb_eq1} and \eqref{eq:appb_eq2}, we get:
\beq{app2_eq2b}
\gamma^{*}(r,K,N) \leq N\left[\sum_{i=N}^{\infty}\frac{1}{i}\left(1-\frac{1}{r}\right)^i\right] =\gamma^{*}_2(r,K,N).
\eeq

To obtain a lower bound, we take $p^i = (r-1)/(r(K-1))$ for all $1\leq i \leq K-1$. Assuming $K > 2$,
\begin{align} \label{eq:appb_eq3}
& \frac{z^i}{\sum_{j=i}^K p^j} = \frac{\theta_i^N - \theta_{i-1}^N}{1-\sum_{j=1}^{i-1} p^j} = \int_{\theta_{i-1}}^{\theta_i}\left(\frac{N\theta^{N-1}}{1-\theta_{i-1}}\right)d\theta \geq \int_{\theta_{i-2}}^{\theta_{i-1}}\left(\frac{N\theta^{N-1}}{1-\theta}\right)d\theta, \nonumber \\
& \Rightarrow \sum_{i=1}^{K-1}\left(\frac{z^i}{\sum_{j=i}^K p^j}\right) \geq \int_{0}^{\theta_{K-2}}\left(\frac{N\theta^{N-1}}{1-\theta}\right)d\theta,
\end{align}
where the first inequality follows because the length of the interval $\theta_i - \theta_{i-1}$ is same for all $i$ and $N\theta^{N-1}/(1-\theta)$ is an increasing function of $\theta$. Also, $\theta_{K-2} = 1 - p^K - p^{K-1} = (1-1/r)\big(1 - 1/(K-1)\big)$. From \eqref{eq:appb_eq2} and \eqref{eq:appb_eq3},
\beq{app2_eq4}
\gamma^{*}(r,K,N) \geq N\left[\sum_{i=N}^{\infty}\frac{1}{i}\left(1-\frac{1}{r}\right)^i\left(1-\frac{1}{K-1}\right)^i\right] = \gamma^{*}_1(r,K,N).
\eeq
The required result follows from \eqref{eq:app2_eq2b}, \eqref{eq:app2_eq4}, and by observing that:
\beqn
\eta(r,K,N) = \frac{\gamma^{*}(r,K,N)}{\frac{r^N - (r-1)^N}{r^{N-1}}+\gamma^{*}(r,K,N)}.
\eeqn

\section{Proof of Proposition \ref{proposition:si-asy-elr}} \label{sec:appendix7}
Consider $N$ buyers competing for one item. The types of buyers $1$ to $N-1$ are in the range $[1,1+\epsilon)$. Let the type of buyer $N$ can take $K$ discrete values, denoted by the vector $\mb{x}_N$, and let $\mb{p}_N$ be the corresponding probability vector. We construct $(\mb{x}_N,\mb{p}_N)$ on the lines of the proof of Proposition~\ref{proposition:elr-si-thm-one-buyer} in Appendix~\ref{sec:appendix3}. Let $\lambda$ be such that $\lambda^{K-1} = (1+\epsilon)/r$. Set $p^i_N = \lambda^{i-1}(1-\lambda)$ for $1\leq i \leq K-1$, and $p^K_N = (1+\epsilon)/r$. It is easy to verify that $\mb{p}_N$ is a valid probability vector. Let $x^i_N = (1+\epsilon)/(\sum_{j=i}^K p^j)$ for $1\leq i \leq K$. Clearly, $1+\epsilon = x^1_N < x^2_N < \ldots < x^K_N = r$. The virtual-valuation function, $w_n$, of buyer $N$ satisfies $w_N(x^i_N) = 0$ for $1\leq i \leq K-1$. Buyer~$N$ dominates over the other buyers and the MSW is $\sum_{i=1}^K p^ix^i$. However, the seller does not sell to buyer $N$ except when his type is $x^K$. If buyer $N$ is not the winner, the social welfare realized cannot be more than $1+\epsilon$. Hence, the loss in the realized social welfare by an optimal auction is at least $\sum_{i=1}^{K-1} p^i_N x^i_N - (1+\epsilon)(1-p^K_N)$. Let $\eta(r,K,N)$ be the worst case ELR defined by \eqref{eq:worst-elr}. Then, 

\begin{align*}
\eta(r,K,N) & \geq \frac{\sum_{i=1}^{K-1} p^i_N x^i_N - (1+\epsilon)(1 - p^K_N)}{\sum_{i=1}^K p^i_N x^i_N} = \frac{\sum_{i=1}^{K-1}\frac{p^i_N}{\sum_{j=i}^{K}p^j_N} - (1-p^K_N)}{\sum_{i=1}^{K}\frac{p^i_N}{\sum_{j=i}^{K}p^j_N}},\\
& = \frac{(K-1)(1-\lambda)-\left(1-\frac{1+\epsilon}{r}\right)}{1+(K-1)(1-\lambda)} = \frac{\gamma^{*}\left(\frac{r}{1+\epsilon},K,1\right)-\left(1-\frac{1+\epsilon}{r}\right)}{1+\gamma^{*}\left(\frac{r}{1+\epsilon},K,1\right)},
\end{align*}
where $\gamma^{*}$ is as defined in Proposition \ref{proposition:elr-si-thm-one-buyer}. In particular, taking limit as $\epsilon \rightarrow 0$, we get:
\beqn
\eta(r,K,N) \geq \frac{\gamma^{*}(r,K,1)-\left(1-\frac{1}{r}\right)}{1+\gamma^{*}(r,K,1)}.
\eeqn

\bibliographystyle{ieeetr}
\bibliography{AuctionTheory}


\end{document}